
\documentclass{article}
\usepackage{amsmath,amssymb,graphicx,amsfonts}
\usepackage{algorithmic}
\usepackage{algorithm}
\usepackage{tikz,xcolor}
\usepackage{graphicx} 
\usetikzlibrary{backgrounds,decorations.pathmorphing,arrows}
\usepackage{amsopn}

\def\EE{\mathbb E}

\def\RR{\mathbb R}
\def\QQ{\mathbb Q}
\def\ZZ{\mathbb Z}

\def\cB{\mathcal B}

\def\cG{\mathcal G}

\def\cS{\mathcal S}

\def\cT{\mathcal T}

\def\bx{\mathbf x}

\def\bx{\mathbf x}

\def\bp{\mathbf p}

\def\b1{\mathbf 1}

\def\eps{\varepsilon}

\newcommand{\qed}{\hfill$\square$\bigskip}
\newcommand{\raf}[1]{(\ref{#1})}
\newcommand{\proof}{\noindent {\bf Proof}~~}

\newcommand{\poly}{\operatorname{poly}}

\newcommand{\Max}{\textsc{White}}
\newcommand{\Min}{\textsc{Black}}

\newcommand{\vol}{\operatorname{vol}}

\newcommand{\hide}[1]{}

\def\Max{\textsc{Max}}
\def\Min{\textsc{Min}}

\newcommand{\bone}{\ensuremath{\boldsymbol{1}}}

\newtheorem{theorem}{Theorem}

\newtheorem{lemma}{Lemma}

\newtheorem{claim}{Claim}
\newtheorem{proposition}{Proposition}

\newtheorem{fact}{Fact}

\newtheorem{remark}{Remark}

\topmargin -0.5in
\footskip 30pt

\oddsidemargin .2 in       
\evensidemargin 0.1 in      
\textheight 9.2 in        
\textwidth 6.0 in         

\title{A Convex Programming-based Algorithm for Mean Payoff Stochastic Games with Perfect Information\thanks{Part of this research was done at the Mathematisches Forschungsinstitut Oberwolfach during a stay within the "Research in Pairs" Program from July 26, 2015-August 15, 2015.
	The research of the third author has been partially funded by the Russian Academic Excellence Project '5-100'. The fourth author was partially supported by KAKENHI Grant Numbers
	24106002 and 26280001.
	}}
\author{
	Endre Boros\thanks{MSIS Department and RUTCOR, Rutgers University, 100 Rockafellar Road, Livingston Campus
		Piscataway, NJ 08854, USA;
		(Endre.Boros@rutgers.edu)}
	\and
	Khaled Elbassioni\thanks{Masdar Institute of Science and Technology, P.O.Box 54224, Abu Dhabi, UAE;
		(kelbassioni@masdar.ac.ae)}
	\and
	Vladimir Gurvich\thanks{MSIS Department and RUTCOR, Rutgers University, 100 Rockafellar Road, Livingston Campus
		Piscataway, NJ 08854, USA; and National Research University, Higher School of Economics, Moscow, Russia;
		(Vladimir.Gurvich@gmail.com)
	}
	\and
	Kazuhisa Makino\thanks{Research Institute for Mathematical Sciences (RIMS)
		Kyoto University, Kyoto 606-8502, Japan;
		(makino@kurims.kyoto-u.ac.jp)}
}

\begin{document}
\date{}
\maketitle
\begin{abstract}
We consider two-person zero-sum stochastic mean payoff games
with perfect information, or BWR-games, given by a digraph
$G = (V, E)$, with local rewards $r: E \to \ZZ$, and
three types of positions: black $V_B$, white $V_W$, and random $V_R$ forming a partition of $V$.
It is a long-standing open question whether a polynomial time algorithm for BWR-games exists, even when $|V_R|=0$. 
In fact, a pseudo-polynomial algorithm for BWR-games would already
imply their polynomial solvability.
In this short note, we show that BWR-games can be solved via convex programming in pseudo-polynomial time if the number of random positions is  a constant. 
\end{abstract}

\section{Introduction}
\label{intro}
We consider two-person zero-sum
stochastic games with perfect information and mean payoff:
Let  $G = (V, E)$ be a digraph whose vertex-set  $V$
is partitioned into three subsets  $V = V_B \cup V_W \cup V_R$
that correspond to black, white, and random positions, controlled
respectively, by two players, \Min\ - the \emph{minimizer}  and
\Max\ - the \emph{maximizer}, and by nature.
We also fix a {\em local reward} function  $r: E \to \ZZ$, and probabilities
$p(v,u)>0$ for all arcs  $(v,u)$ going out of  $v \in V_R$. We assume that  $\sum_{u\mid (v,u)\in E}p(v,u)=1$, for all $v\in V_R$.
Vertices  $v \in V$  and  arcs $e \in E$  are called
{\em positions} and {\em moves}, respectively.
The game begins at time $t=0$ in the initial position $s_0=v_0$. In a general step, in time $t$, we are at position $s_t\in V$.
The player who controls $s_t$ chooses an outgoing arc $e_{t+1}=(s_t,v)\in E$, and the game moves to position $s_{t+1}=v$. If $s_t\in V_R$ then an outgoing arc is choses with the given probability $p(s_t,s_{t+1})$.  
We assume that every vertex in $G$ has an outgoing arc.
In general, the strategy of the player is a policy by which (s)he chooses the outgoing arcs from the vertices (s)he controls. This policy may involve the knowledge of the previous steps as well as probabilistic decisions. We call a strategy {\it stationary} if it does not depend on the history and {\it pure} if it does not involve probabilistic decisions. For this type of games, it will be enough to consider only such strategies, since these games are known to be (polynomially) equivalent \cite{DGAA13} to the perfect information stochastic games considered by Gillette \cite{Gil57,LL69}. 

In the course of this game players and nature generate an infinite sequence
of edges $\bp=(e_1,e_2,\ldots)$ (a \emph{play}) and the corresponding real sequence
$r(\bp)=(r(e_1),r(e_2),\ldots)$ of local rewards.
There is a global payoff function $\phi$ that maps any local reward sequence to a real number, and it is assumed that \Min\ pays \Max\ the  amount $\phi(r(\bp))$ resulting from the play.
Naturally, \Max's aim is to create a play which maximizes $\phi(r(\bp))$, while \Min\ tries to minimize it.
(Let us note that
the local reward function $r:E\rightarrow \RR$ may have negative values, and
$\phi(r(\bp))$ may also be negative, in which case
\Max\ has to pay \Min\ $-\phi(r(\bp))$. Let us also note that $r(\bp)$ is a random variable since random transitions occur at positions in $V_R$.) 
Here $\phi$ stands for the \emph{limiting mean payoff}
\begin{equation}\label{e0}
\phi(r(\bp))=\liminf_{T\rightarrow\infty}\frac{\sum_{i=1}^T\EE[r(e_i)]}{T},
\end{equation}
where $\EE[r(e_i)]$ is the expected reward incurred at step $i$ of the play.

As usual, a pair of (not necessarily pure or stationary) strategies is a {\it saddle point} (or {\it equilibrium}) if neither of the players can improve individually by changing her/his strategy. The corresponding $\phi(r(\bp))$ is the value $\mu^{}_{\cG}(v_0)$ of the game with respect to initial position $v_0$. Such a pair of strategies are called {\it optimal}; furthermore, it is called {\it uniformly optimal} if it provides the value of the game for any initial position. It is known \cite{Gil57,LL69} that
every such game has a pair of uniformly optimal pure stationary strategies. A BWR-game is said to be {\it ergodic} if  $\mu^{}_\cG(v)=\mu$ for all $v\in V$, that is, the value is the same from each initial position.

\medskip

This class of \emph{BWR-games} was introduced in \cite{GKK88}; see also \cite{CH08}.
The special case when $V_R = \emptyset$, \emph{BW-games}, is also known as
{\em cyclic} games.
They were introduced for the complete bipartite digraphs in \cite{Mou76,Mou76a},
for all (not necessarily complete) bipartite digraphs in \cite{EM79}, and for arbitrary digraphs\footnote{In fact, BW-games on arbitrary digraphs can be polynomially reduced to BW-games on bipartite digraphs \cite{DGAA13}; moreover, the latter class can further be reduced to BW-games on complete bipartite digraphs \cite{Vienna}.} in \cite{GKK88}.
A more special case was considered extensively in the literature under
the name of \emph{parity games} \cite{BV01,BV01a,CJH04,Hal07,J98,JPZ06}, and later generalized
also to include random positions in \cite{CH08}.
A BWR-game is reduced to a \emph{minimum mean cycle problem} in case $V_W = V_R = \emptyset$,
see, e.g., \cite{Kar78}.
If one of the sets $V_B$ or $V_W$ is empty, we obtain
a {\em Markov decision process} (MDP), which can be expressed as a linear program; see, e.g., \cite{MO70}.
Finally, if both are empty, $V_B = V_W = \emptyset$, we get a {\em weighted Markov chain}.
For BW-games several pseudo-polynomial and subexponential algorithms are known 
\cite{GKK88,KL93,ZP96,Pis99,BV01,BV01a,BSV04,BV05,BV07,Hal07,Schewe09,Vor08};
see also \cite{JPZ06} for parity games.
Besides their many applications (see e.g. \cite{Lit96,J2000}), all these games are of interest to Complexity Theory: It is known \cite{KL93,ZP96} that
the decision problem ``whether the value of a BW-game is positive" is in the intersection of NP and co-NP. Yet, no polynomial algorithm is known for these games, see e.g., the survey by Vorobyov \cite{Vor08}. A similar complexity claim can be shown to hold for BWR-games, see \cite{AM09,DGAA13}.

\subsection*{Main result}

The computational complexity of stochastic games with perfect information is an  outstanding open question; see, e.g., the survey \cite{RF91}. 
While there are numerous pseudo-polynomial algorithms known for the BW-case, it is a challenging open question whether a pseudo-polynomial algorithm exists for BWR-games, as the existence of such an algorithm would imply also the polynomial solvability of this class of games \cite{AM09}.

In \cite{BEGM13,BEGM15-arxiv}, we gave a pseudo-polynomial algorithm for BWR-games when {\it the number of random positions is fixed}. In this note we show that one can obtain a similar result via convex programming, combined with some of the ideas in \cite{BEGM13,BEGM15-arxiv}.

For a BWR-game $\cG$ let us denote by $n=|V_W|+|V_B|+|V_R|$ the number of positions, by $k=|V_R|$ the number of random positions, and assume (without loss of generality) that all local rewards are non-negative integers with maximum value $U$ and all transition probabilities are rational with common denominator $D$. The main result of this paper is as follows.

\begin{theorem} \label{t-main}
A BWR-game $\cG$ can be solved in $\poly(n,U,D^k)$ time via convex programming.
\end{theorem}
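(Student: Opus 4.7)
The plan is to exploit the small number $k$ of random positions by eliminating the potentials at $V_R$ and reducing to a ``virtual'' BW-game whose rewards have pseudo-polynomial magnitude, which can then be solved by a convex program. First, after a standard ergodic decomposition, I would work with the Shapley--Bellman fixed-point equations characterizing the value $\mu$ and the potential $x$: namely $\mu + x_v = \max_{u:(v,u)\in E} [r(v,u) + x_u]$ at $v \in V_W$, $\mu + x_v = \min_{u:(v,u)\in E} [r(v,u) + x_u]$ at $v \in V_B$, and $\mu + x_v = \sum_{u} p(v,u)[r(v,u) + x_u]$ at $v \in V_R$. The existence of $(\mu, x)$ corresponding to a pair of pure stationary uniformly optimal strategies is classical and underlies also the approach of \cite{BEGM13,BEGM15-arxiv}.

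Second, I would eliminate the variables $\{x_v : v \in V_R\}$. The $k$ linear equations at random positions can be rewritten as $(I - P_{RR})\, x_R \;=\; c(\mu) + P_{RB}\, x_{V_W\cup V_B}$, where $P_{RR}$ and $P_{RB}$ are the appropriate blocks of the transition matrix and $c(\mu)$ is an affine function of $\mu$. Inverting the $k \times k$ matrix $(I - P_{RR})$ via Cramer's rule, and using that each transition probability is a rational with denominator dividing $D$, the entries of $(I - P_{RR})^{-1}$ have denominator dividing $D^k$. Substituting $x_R$ back into the BW-position equations yields an equivalent mean-payoff BW-game on $V_W \cup V_B$ with modified rational rewards; after clearing denominators one obtains integer rewards bounded by $\poly(n, U, D^k)$.

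Finally, this virtual BW-game is cast as a convex program (for instance, a feasibility LP for the potential vector parameterized by $\mu$ at the minimizer positions, combined with a binary search on $\mu$ guided by the monotone non-expansive Bellman operator), which can be solved in time pseudo-polynomial in the magnitude of the rewards, giving an overall complexity of $\poly(n, U, D^k)$. The main obstacle is the substitution step: because $c(\mu)$ depends affinely on the scalar $\mu$, the new BW-equations acquire $\mu$-coefficients that differ from unity along edges originally entering $V_R$, so one must algebraically rearrange to recover a standard mean-payoff BW-game on which the convex-programming solver applies; the $D^k$ factor in the final complexity is inherited directly from the $k \times k$ matrix inversion.
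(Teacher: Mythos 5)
There is a genuine gap, and it sits exactly where you flag ``the main obstacle'': the elimination of the random-position potentials does \emph{not} produce a BW-game. The equations at $v\in V_R$ are indeed linear and strategy-independent, so you may solve $(I-P_{RR})x_R = c(\mu)+P_{RB}x_{V_W\cup V_B}$ and substitute (when $I-P_{RR}$ is nonsingular, i.e.\ when no random class is closed). But the expression you obtain for $x_u$, $u\in V_R$, is a \emph{convex combination} $\sum_w q_u(w)\,x_w + \mathrm{const}(\mu)$ over controlled positions $w$ (the $q_u(w)$ are absorption probabilities into $V_W\cup V_B$). After substitution, the Bellman equation at a controlled position reads $\mu + x_v = \max_{u}\bigl[r(v,u) + \sum_w q_u(w)x_w + c_u(\mu)\bigr]$: a max of affine functions of several potentials, not a max of terms $r'(v,w)+x_w$ each involving a single successor potential. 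That is still the fixed-point operator of a game with chance moves (the randomness has been inlined, not removed), so no ``equivalent mean-payoff BW-game on $V_W\cup V_B$'' exists, and the final step (solve a virtual BW-game by an LP/convex program \`a la Schewe) has nothing to apply to. Note also that your elimination never uses that $k$ is small except through denominators, so if it worked it would reduce arbitrary BWR-games to BW-games -- which would resolve far more than the paper claims and is not believed to be possible. Two secondary issues: the ``standard ergodic decomposition'' you invoke up front is itself one of the hard subroutines (the paper spends Section~5 computing the top and bottom value classes), and the BW feasibility system in $(\mu,x)$ is not an LP because the maximizer's constraints $\max_u[r+x_u]\ge \mu+x_v$ are disjunctive.

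For contrast, the paper keeps all positions and instead convexifies the canonical-form inequalities $M[r_x]\ge t\bone$ and $M[r_x]\le t\bone$ directly: it replaces max/min by softmax/softmin with base $b$, substitutes $y(v)=b^{\mp x(v)}$ so that white/black constraints become linear and the random-position constraints become weighted geometric means (concave by Fact~\ref{concave}), solves the resulting feasibility problems by the ellipsoid method with an approximate separation oracle, binary-searches on $t$ to locate $t_{\max}$ and $t_{\min}$, identifies the top and bottom classes by testing feasibility with $y(w)\ge b^{-L}$ forced, and only then invokes the recursive scheme of \cite{BEGM13,BEGM15-arxiv}. The $D^k$ dependence enters through the accuracy $\eps$ dictated by Lemma~\ref{l1-prob} and the potential bound $L$ of Theorem~\ref{t2}, not through a matrix inversion. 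If you want to salvage your outline, you would have to treat the post-substitution system as what it is -- a stochastic (MDP-like) Bellman system at the controlled positions -- at which point you are back to needing a convexification of exactly the kind the paper constructs.
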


This theorem extends the result by Schewe \cite{Schewe09}, where it was shown that solving BW-games can be reduced to solving linear programming problems with \emph{pseudo-polynomial} bit length.

\medskip

According to the results in \cite{BEGM13,BEGM15-arxiv}, to get a pseudo-polynomial algorithm for BWR-games, it is enough to have pseudo-polynomial routines for: (i) solving BW-games; (ii) solving {\it ergodic} BWR-games; and (iii) finding the top and bottom classes in a non-ergodic BWR-game (that is, the sets of positions with highest and lowest values). 

There are several pseudo-polynomial algorithms for solving BW-games,  e.g., \cite{GKK88,Pis99,ZP96}. One may also use the LP-based algorithm given in \cite{Schewe09}.  For (ii) we show in Section~\ref{sec111} how to obtain the top (resp., bottom) class in a BWR-game, and a pair of strategies solving the game induced by the top (resp., bottom) class. This also provides an algorithm for ergodic BWR-games as required in (iii).

\section{Potential transformations and canonical forms}
\label{ssPot}
Given a BWR-game  $\cG = (G, p, r)$, let us introduce
a mapping  $x : V \rightarrow \RR$, whose
values  $x(v)$  will be called {\em potentials}, and
define the transformed reward function $r_x:E\to\RR$  as:
\begin{equation}
\label{eqpot}
r_x(v,u) = r(v,u) + x(v) - x(u),
\;\; \mbox{where} \;\;
(v,u) \in E.
\end{equation}

\smallskip

It is not difficult to verify that the obtained game  $\cG^x$
and the original game  $\cG$ are equivalent (see \cite{DGAA13}).
In particular, their optimal
(pure stationary) strategies coincide, and
their value functions also coincide: $\mu^{}_{\cG^x} = \mu^{}_{\cG}$.

It is known that for BW-games there exists a potential transformation such that, in the obtained game the locally optimal strategies are globally optimal,
and hence, the value and optimal strategies become obvious \cite{GKK88}. This result was extended for the more general class of BWR-games in \cite{DGAA13}: in the transformed game, the equilibrium value $\mu^{}_{\cG}(v)=\mu^{}_{\cG^x}(v)$ is given simply by the maximum local reward for $v\in V_W$, the minimum local reward for $v\in V_B$, and the average local reward for $v\in V_R$. In this case we say that the transformed game is in \emph{canonical} form. To define this more formally, let us use the following notation throughout this section: Given functions $f:E\to\RR$ and $g:V\to\RR$, we define the functions ${M}[f],{M}[g]:V\to\RR$.

\begin{equation*}
{M}[f](v) = \left\{\begin{array}{lll}
\max_{u \mid (v, u) \in E}f(v,u), &\text{for } v \in V_W, \\
\min_{u \mid (v, u) \in E}f(v,u), &\text{for } v \in V_B, \\
\sum_{u \mid (v, u) \in E}   p(v,u)\, f(v,u), &\text{for } v \in V_R.
\end{array}\right.
\end{equation*}

\begin{equation*}
{M}[g](v) = \left\{\begin{array}{lll}
\max_{u \mid (v, u) \in E}g(u), &\text{for } v \in V_W, \\
\min_{u \mid (v, u) \in E}g(u), &\text{for } v \in V_B, \\
\sum_{u \mid (v, u) \in E}   p(v,u)\, g(u), &\text{for } v \in V_R.
\end{array}\right.
\end{equation*}

We say that a BWR-game $\cG$ is in (strong) canonical form if there exist vectors $\mu,x\in\RR^V$ such that
\begin{itemize}
\item[(C1)]   $\mu=  M[\mu]=M[r_x]$ and,
\item[(C2)] for every  $v \in V_W \cup V_B$,
every move  $(v,u) \in E$
such that  $\mu(v) =  r_x(v,u)$ must also have $\mu(v)=\mu(u)$, or
in other words, every locally optimal move $(v,u)$
is globally optimal.
\end{itemize}



\begin{theorem}[\cite{DGAA13}]
\label{t2}
For each BWR-game  $\cG$ there
is a potential transformation 
$x\in\RR^V$ that brings $\cG$  to
canonical form with $\|x\|_{\infty}\leq L:=n Uk(2D)^{k}$. Furthermore, in a game in canonical form we have $\mu^{}_{\cG} = M[r_x]$.
\end{theorem}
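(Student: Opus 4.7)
I would split the argument into two parts: existence of a canonical transformation, and the quantitative bound on $\|x\|_\infty$. For existence, the cleanest route is the discount-limit method. For each $\beta\in(0,1)$, the $\beta$-discounted BWR-game admits a unique value vector $v_\beta$ solving a contractive Bellman fixed-point equation, together with optimal pure stationary strategies. Writing $v_\beta(v)=\frac{\mu_\beta(v)}{1-\beta}+x_\beta(v)$ and letting $\beta\uparrow 1$ along a subsequence on which the (finitely many) optimal pure stationary strategy pairs stabilize, one extracts in the limit a pair $(\mu,x)$ satisfying (C1) and (C2). The \emph{furthermore} claim $\mu^{}_\cG=M[r_x]$ then follows from a routine Markov-chain computation: under the locally optimal actions prescribed by (C1), condition (C2) forces global optimality, and the harmonicity $\mu=M[\mu]$ together with $\mu=M[r_x]$ identifies $\mu$ with the expected mean payoff under equilibrium play, with any unilateral deviation provably non-improving.

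For the bound, I would fix an equilibrium pair $(\sigma_W,\sigma_B)$ of pure stationary strategies and let $P\in[0,1]^{V\times V}$ be the induced transition matrix, so that $(\mu,x)$ solve $P\mu=\mu$ and $x+\mu=\bar r+Px$, where $\bar r(v)=r(v,\sigma(v))$ for $v\in V_W\cup V_B$ and $\bar r(v)=\sum_u p(v,u)r(v,u)$ for $v\in V_R$. Under $\sigma$, every controlled vertex has a single outgoing arc, so the deterministic part of $G$ decomposes into trees attached to random vertices together with pure controlled cycles. Walking along a deterministic chain from a controlled vertex $v$ to the next random vertex (or around a controlled cycle of known average reward), the increment $x(v)-x(\text{endpoint})$ is a telescoping sum of at most $n$ terms of magnitude at most $U$. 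Consequently it suffices to bound $\|x|_{V_R}\|_\infty$ on the $k$ random coordinates and add $nU$ for deterministic propagation.

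After eliminating deterministic vertices by this substitution, the residual Poisson-type system takes the form $\hat x + \hat\mu=\hat r'+\hat Q\hat x$ on $V_R$, where $\hat Q$ is a $k\times k$ substochastic matrix whose entries are rationals with common denominator $D$, and $\hat r'$ has rational entries of magnitude $O(nU)$ obtained by summing deterministic-chain rewards weighted by transition probabilities. Within each recurrent class, fixing one coordinate of $\hat x$ to zero yields a nonsingular subsystem whose matrix, after multiplying by $D$, becomes an integer matrix whose rows of $D(I-\hat Q)$ have $\ell_1$-norm at most $2D$. Cramer's rule expresses each $\hat x(v)$ as a ratio of $k\times k$ integer determinants; the denominator is bounded by $(2D)^k$ and the numerator by $nU\cdot(2D)^{k-1}\cdot k$ by the row-$\ell_1$ estimate. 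Combining with the deterministic contribution gives the asserted bound $L=nUk(2D)^k$.

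The main obstacle I anticipate is sharpening the determinant estimate to $(2D)^k$ rather than the loose $k^{k/2}D^k$ Hadamard would give for a generic integer matrix; this requires carefully exploiting the substochastic structure of $\hat Q$ and the row-$\ell_1$ bound on $D(I-\hat Q)$, rather than treating the entries generically. A secondary subtlety is the multichain (non-ergodic) case, where one must perform the reduction class by class and stitch the class-wise potentials together across transient states without losing control of the $\ell_\infty$ norm; this is mechanical once the recurrent-class bound is in hand.
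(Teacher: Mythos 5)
First, note that the paper does not prove this statement at all: Theorem~\ref{t2} is imported verbatim from \cite{DGAA13}, so there is no in-paper proof to compare against. Judged on its own, your quantitative part is in the right spirit and essentially matches how the bound is obtained in the cited source: fix an optimal pure stationary pair, telescope potentials along the deterministic chains (contributing at most $nU$), reduce to a Poisson-type system on the $k$ random positions with a first-return substochastic matrix $\hat Q$ of denominator $D$, and apply Cramer's rule with the row-$\ell_1$ estimate on $D(I-\hat Q)$ (note $\|\cdot\|_2\le\|\cdot\|_1$ makes the $(2D)^{k}$ determinant bound immediate, so the ``obstacle'' you flag is not really one). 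The arithmetic does land at $L=nUk(2D)^k$.

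The genuine gap is in the existence part, specifically condition (C2). The vanishing-discount limit yields a pair $(\mu,x)$ satisfying the first- and second-order optimality equations, hence (C1), but it does \emph{not} automatically satisfy (C2): a move $(v,u)$ can be tied at first order, i.e.\ $r_x(v,u)=\mu(v)$, while $\mu(u)<\mu(v)$, because the tie is only broken at higher order in the Laurent expansion of $v_\beta$. This is exactly the situation (C2) is designed to exclude, and it is why \cite{DGAA13} must further perturb the potentials (e.g.\ by adding a small multiple of a secondary potential that strictly separates the value classes, constructed class by class) to make every locally optimal move globally optimal. Your sentence ``one extracts in the limit a pair $(\mu,x)$ satisfying (C1) and (C2)'' asserts the hard step rather than proving it, and the ``furthermore'' claim $\mu^{}_{\cG}=M[r_x]$ depends on (C2): without it a player following locally optimal moves can be lured into a lower value class. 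You also need to verify that this tie-breaking perturbation does not destroy the $\|x\|_\infty\le L$ bound, which you derived only for the unperturbed Poisson solution; the class-by-class ``stitching'' you defer is where this has to be checked and is not purely mechanical.
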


%
%
%

In this paper, we will provide a convex programming formulation based on the existence of potential transformations.

We will need the following upper bound on the required accuracy.
\begin{lemma}[\cite{IPCO-2010,BEGM15-arxiv}]\label{l1-prob}
For any position $v$ in the top (resp., bottom) class in a BWR-game $\cG$, the value $\mu^{}_{\cG}(v)$ is a rational number with a denominator at most $\sqrt{k}2^{k/2}D^{k+1}$. 
\end{lemma}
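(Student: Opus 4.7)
The plan is to restrict attention to the top (resp., bottom) class, which is ergodic by definition, and to express its common value $\mu^*:=\mu^{}_{\cG}(v)$ as a rational determined by a linear system of size $O(k)$; a Hadamard-style estimate then yields the denominator bound.

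Fix a pair of uniformly optimal pure stationary strategies $(\sigma_W,\sigma_B)$, which exist by \cite{Gil57,LL69}. Since any move from the top class must stay within the top class, these strategies induce a Markov chain on the class whose long-run average reward equals $\mu^*$. Collapsing each maximal deterministic sub-path (a path of $V_W\cup V_B$-positions ending at a random position) into a single super-transition gives a reduced Markov chain $\cM$ on at most $k$ states, whose stochastic transition matrix $P$ has entries of denominator $D$. Writing $Q:=DP$ for the associated $k\times k$ integer matrix and combining the canonical-form equations of Theorem~\ref{t2} with the normalization $x(v_0)=0$ for some $v_0\in V_R$, one obtains an explicit $(k+1)\times(k+1)$ integer linear system whose solution component for $\mu^*$, by Cramer's rule, is a ratio of two determinants of integer matrices.

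Hadamard's inequality applied to the Cramer denominator then produces the bound: each row of $I-P$ has $\ell_1$-norm at most $2$ by stochasticity, hence each of the $k$ scaled rows of $Q-DI$ has $\ell_2$-norm at most $\sqrt{2}\,D$, contributing $2^{k/2}D^k$; a single normalization row contributes $\sqrt{k}\,D$, yielding the stated bound $\sqrt{k}\,2^{k/2}\,D^{k+1}$. The \emph{main obstacle} is that the super-transitions of $\cM$ can have length up to $n$, so a naive formulation of the linear system would allow these lengths to appear as matrix entries and produce a spurious $n$-dependent bound. The trick, as used in \cite{IPCO-2010,BEGM15-arxiv}, is to organize the Cramer computation so that the length-dependent terms only enter cofactors that are divided out, and therefore do not affect the denominator of $\mu^*$ in lowest terms.
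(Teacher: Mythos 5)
First, note that the paper itself does not prove Lemma~\ref{l1-prob}: it is imported from \cite{IPCO-2010,BEGM15-arxiv}, so there is no in-paper argument to compare against. Your outline (fix uniformly optimal pure stationary strategies, aggregate the induced Markov chain over the at most $k$ random positions of the class, express the common value by Cramer's rule, bound the resulting determinant by Hadamard) is indeed the route taken in those references, and your Hadamard accounting of the $k\times k$ aggregated system (rows of $I-\hat P$ of $\ell_2$-norm at most $\sqrt2$, entries with denominator $D$) is correct as far as it goes.

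However, there is a genuine gap exactly at the point you flag as the ``main obstacle,'' and the fix you propose does not work. After eliminating the potentials of the deterministic positions, the canonical-form equation at a random position $v$ reads $\bigl(1+\sum_w p(v,w)m_w\bigr)\mu = \hat c_v + x(v) - \sum_u\hat p(v,u)x(u)$, where $m_w$ is the length of the deterministic segment from $w$ to the next random position. The lengths therefore sit in the column of the coefficient matrix that multiplies $\mu$, i.e., precisely in the Cramer \emph{denominator} of $\mu$; they are not confined to cofactors that cancel. A concrete instance shows they cannot be divided out: take $k=1$, $D=2$, a single random position $v$ with two successors of probability $1/2$ each, returning to $v$ along deterministic paths of $2$ and $3$ edges respectively, with a single edge of reward $1$ on the first path and reward $0$ elsewhere. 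The whole game is one ergodic class, and its value is $\frac{1/2}{1+\frac12\cdot 2+\frac12\cdot 3}=\frac17$, whose denominator $7$ already exceeds $\sqrt{k}\,2^{k/2}D^{k+1}=4\sqrt2$. So the cancellation you assert is false, and the bound as stated cannot be reached by this (or, it appears, any) argument without an additional factor accounting for the segment lengths --- in the renewal--reward form $\mu=\frac{\sum_v\hat\pi(v)\hat R(v)}{\sum_v\hat\pi(v)\hat\ell(v)}$ the denominator carries an extra factor of order $nD$ beyond the denominator of $\hat\pi$; note also that the stated bound degenerates to $0$ when $k=0$, whereas a BW-game value is a cycle mean with denominator up to $n$. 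What your argument actually establishes is a denominator bound for the stationary distribution $\hat\pi$ of the aggregated chain, not for $\mu$ itself.
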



\begin{lemma}\label{l1}
Consider a BWR-game $\cG$ and denote by $\bone$ the vector of all ones. Then there exists a potential vector $x\in\RR^V$ and $t\in\RR$ such that $M[r_x] \ge t\bone$ if and only if $\mu^{}_{\cG}\ge t\bone$.	
\end{lemma}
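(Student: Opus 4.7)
The plan is to prove the two directions separately, both leveraging the fact that applying a potential transformation yields an equivalent game with the same value, $\mu^{}_{\cG^x}=\mu^{}_{\cG}$.

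For the $(\Leftarrow)$ direction, the work is essentially already done by Theorem~\ref{t2}: I would simply invoke it to obtain a potential vector $x\in\RR^V$ bringing $\cG$ into canonical form, so that $\mu^{}_{\cG}=M[r_x]$. The hypothesis $\mu^{}_{\cG}\ge t\bone$ then directly yields $M[r_x]\ge t\bone$ for that choice of $x$.

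For the $(\Rightarrow)$ direction, assume $M[r_x]\ge t\bone$ for some potential $x$. I would pass to the equivalent game $\cG^x$ and exhibit a stationary strategy for \Max\ that guarantees mean payoff at least $t$ from every initial position; combined with $\mu^{}_{\cG}=\mu^{}_{\cG^x}$, this gives the desired conclusion. The strategy is the obvious one: at each $v\in V_W$, pick any outgoing arc $(v,u^*)$ that attains $\max_u r_x(v,u)$, which is at least $t$ by hypothesis. Against any (possibly history-dependent, randomized) strategy of \Min, every arc $(v,u)$ with $v\in V_B$ carries reward $r_x(v,u)\ge\min_{u'}r_x(v,u')\ge t$, and every $v\in V_R$ contributes expected one-step reward $\sum_u p(v,u)r_x(v,u)\ge t$. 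Conditioning on the position $s_{i-1}$ at step $i-1$, taking expectation over the random transitions and over \Min's choices yields $\EE[r_x(e_i)]\ge t$ for every $i\ge 1$. Since every term of the averaged sequence defining the mean payoff is bounded below by $t$, so is its $\liminf$, giving $\phi(r_x(\bp))\ge t$ from every initial position, regardless of \Min's strategy.

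No serious obstacle is anticipated: the $(\Leftarrow)$ direction is an immediate corollary of Theorem~\ref{t2}, while the $(\Rightarrow)$ direction is a routine strategy-plus-expectation argument. The only subtle point worth emphasizing is that at $v\in V_B$ the operator $M$ takes the \emph{minimum} over outgoing arcs, so regardless of \Min's move the one-step reward at $v\in V_B$ is at least $t$; without this observation one might worry that \Min\ could select a bad arc and pull the long-run average below $t$.
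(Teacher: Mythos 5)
Your proposal is correct and follows essentially the same route as the paper: the forward direction via the locally optimal strategy for \Max\ in the transformed game $\cG^x$ guaranteeing expected one-step reward at least $t$ (hence $\liminf$ at least $t$), and the reverse direction as an immediate consequence of Theorem~\ref{t2}. Your write-up is somewhat more explicit about the conditioning/expectation step and about why \Min's moves cannot hurt, but the argument is the same.
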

\proof
	Indeed, if \Max\ (\Min) applies a locally optimal strategy $s_W^{}$ in the transformed game $\cG^x$ then after every own move (s)he will get (pay) at least $t$,  while for each move of the opponent the local reward will be at least (at most)  $t$, and finally, for each random position the expected local reward is  at least $t$.
	Thus, the expected local reward $\EE[r_x(e_i)]$ at each step of the play is at least $t$. Hence, by \raf{e0}, strategy $s_W^{}$ guarantees $\Max$ at least $t$ from any starting position.  
	
	The other direction follows from Theorem~\ref{t2}.
\qed

A symmetric version of Lemma~\ref{l1} can also be obtained by similar arguments.
\begin{lemma}\label{l2}
	Consider a BWR-game $\cG$. Then there exists a potential vector $x\in\RR^V$ and $t\in\RR$ such that $M[r_x] \le t\bone$ if and only if $\mu^{}_{\cG}\le t\bone$.	
\end{lemma}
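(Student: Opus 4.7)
The plan is to mirror the proof of Lemma~\ref{l1} with the roles of \Max\ and \Min\ swapped, so that the nontrivial direction is again handled by exhibiting an explicit locally optimal stationary strategy, and the other direction follows directly from Theorem~\ref{t2}.

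For the ``if'' direction, I would assume $x\in\RR^V$ and $t\in\RR$ satisfy $M[r_x]\le t\bone$ and let \Min\ play a pure stationary strategy $s_B^{}$ that, at each $v\in V_B$, picks an outgoing arc attaining $\min_{u:\,(v,u)\in E}r_x(v,u)=M[r_x](v)\le t$. Against any strategy of \Max, I would verify step by step that $\EE[r_x(e_i)]\le t$ at every time $i$: at \Min-positions this holds by the choice of $s_B^{}$; at \Max-positions $v\in V_W$ any outgoing arc $(v,u)$ satisfies $r_x(v,u)\le M[r_x](v)\le t$; and at each random position the expected local reward equals $M[r_x](v)\le t$. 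Substituting into~\raf{e0}, all finite partial averages are at most $t$, hence so is their $\liminf$. Since the potential transformation preserves the value, $\mu^{}_{\cG}=\mu^{}_{\cG^x}$, the strategy $s_B^{}$ certifies $\mu^{}_{\cG}\le t\bone$ in the original game $\cG$.

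For the ``only if'' direction, I would invoke Theorem~\ref{t2} to obtain a potential vector $x$ bringing $\cG$ into canonical form. By (C1), in canonical form we have $\mu^{}_{\cG}=M[r_x]$, and so $\mu^{}_{\cG}\le t\bone$ immediately yields $M[r_x]\le t\bone$.

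I do not anticipate any real obstacle; the argument is entirely symmetric to that of Lemma~\ref{l1}. The only mild subtlety worth flagging is that $\phi$ is defined via $\liminf$ rather than $\limsup$, but a uniform pointwise bound $\EE[r_x(e_i)]\le t$ bounds every finite partial average, and hence the $\liminf$ itself, from above by $t$, which is precisely what \Min\ needs to certify an upper bound on the value.
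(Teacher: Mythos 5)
Your proposal is correct and matches the paper's intended argument: the paper proves Lemma~\ref{l2} simply by noting it follows ``by similar arguments'' to Lemma~\ref{l1} (indeed the parenthetical remarks in that proof already sketch the \Min\ case), and you have spelled out exactly that symmetric argument, with the ``only if'' direction from Theorem~\ref{t2}. Your remark that a uniform per-step bound $\EE[r_x(e_i)]\le t$ controls the $\liminf$ in~\raf{e0} is the right (and only) point of care.
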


\section{The convex programs}
The following simple facts relate the {\it softmax} (resp., {\it softmin}) to the maximimum (resp., minimum) of a set of numbers. 
\begin{fact}\label{softmax-softmin}
	For any numbers $a_1,\ldots,a_n\in\RR$ and $b>1$:
	\begin{itemize}
		\item[(i)] $\max_i{a_i}\le\log_b\sum_{i}b^{a_i}\le \max_ia_i+\log_b n$;
		\item[(ii)] $\min_i{a_i}\ge -\log_b\sum_{i}b^{-a_i}\ge \min_ia_i-\log_b n$.
	\end{itemize} 
\end{fact}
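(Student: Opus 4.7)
The plan is to prove this by direct estimation, using only the monotonicity and positivity of $b^x$ for $b>1$ and the monotonicity of $\log_b$.

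For part (i), I would let $M = \max_i a_i$ and observe that the sum $\sum_i b^{a_i}$ sits between one term and $n$ terms of size $b^M$. Concretely, since every $b^{a_i}>0$ and since $b^M$ appears as one of the summands, we have $b^M \le \sum_i b^{a_i}$; on the other hand, since $b>1$ makes $x\mapsto b^x$ monotonically increasing, every summand satisfies $b^{a_i}\le b^M$, so $\sum_i b^{a_i}\le n\, b^M$. Taking $\log_b$ of the sandwiched inequality $b^M\le \sum_i b^{a_i}\le n\, b^M$ (which is legitimate because $\log_b$ is monotonically increasing for $b>1$) gives $M\le \log_b\sum_i b^{a_i}\le M + \log_b n$, which is exactly (i).

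For part (ii), I would simply reduce to (i) by the substitution $a_i\mapsto -a_i$. Applying (i) to the numbers $-a_1,\ldots,-a_n$ yields
\[
\max_i(-a_i)\;\le\;\log_b\sum_i b^{-a_i}\;\le\;\max_i(-a_i)+\log_b n.
\]
Using the identity $\max_i(-a_i) = -\min_i a_i$ and multiplying through by $-1$ (which reverses the inequalities) gives $\min_i a_i \ge -\log_b\sum_i b^{-a_i}\ge \min_i a_i - \log_b n$, which is (ii).

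There is no real obstacle here: the statement is a standard log-sum-exp sandwich, and the entire argument is just monotonicity of $b^x$ and $\log_b$ together with the elementary fact that a sum of $n$ positive terms lies between its maximum and $n$ times its maximum. The only care needed is to invoke $b>1$ twice (once for the positivity/monotonicity of $b^x$, and once to ensure $\log_b$ preserves the direction of inequalities).
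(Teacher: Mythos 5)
Your proof is correct and follows essentially the same route as the paper, which simply cites the sandwich $b^{\max_i a_i}\le\sum_i b^{a_i}\le n\,b^{\max_i a_i}$ and takes $\log_b$; your reduction of (ii) to (i) via $a_i\mapsto -a_i$ is the same implicit symmetry the paper relies on, just spelled out. Nothing is missing.
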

\proof
This follows from the fact the trivial inequalities $b^{\max_i{a_i}}\le\sum_{i}b^{a_i}\le n b^{\max_i{a_i}}$.  
\qed

\begin{fact}\label{concave}
Let $\alpha_1,\ldots,\alpha_n>0$ be given numbers such that $\sum_{i=1}^n\alpha_i=1$. Then the function $f(x)=\prod_{i=1}^nx_i^{\alpha_i}$ is concave for $x\ge 0$.
\end{fact}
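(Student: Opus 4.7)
The plan is to prove the Fact via the weighted AM-GM inequality, which asserts that $\prod_i a_i^{\alpha_i} \leq \sum_i \alpha_i a_i$ for nonnegative reals $a_1,\ldots,a_n$ whenever the positive weights $\alpha_i$ sum to one. This inequality matches the product structure of $f$ exactly, which is what makes it the natural tool here.

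First I would reduce to the case where all coordinates are strictly positive. Since each factor $x_i^{\alpha_i}$ is continuous at $0$ (with value $0$), the function $f$ is continuous on the entire nonnegative orthant, so concavity on the open orthant extends to its closure by a standard limiting argument. Fix then $x,y>0$ and $t\in[0,1]$, and let $z=tx+(1-t)y$, which also has strictly positive coordinates. The target inequality $f(z)\geq t f(x)+(1-t)f(y)$, after dividing by $f(z)>0$, becomes
\[
t\prod_{i=1}^{n}\!\left(\frac{x_i}{z_i}\right)^{\!\alpha_i}+(1-t)\prod_{i=1}^{n}\!\left(\frac{y_i}{z_i}\right)^{\!\alpha_i}\leq 1.
\]
Applying weighted AM-GM separately to each of the two products on the left bounds it above by
\[
t\sum_{i}\alpha_i\frac{x_i}{z_i}+(1-t)\sum_{i}\alpha_i\frac{y_i}{z_i}=\sum_{i}\alpha_i\frac{tx_i+(1-t)y_i}{z_i}=\sum_{i}\alpha_i=1,
\]
which closes the argument.

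The key idea is the normalization by $f(z)$, which converts a product-vs-sum inequality into a form that AM-GM dispatches in two symmetric applications. There is no real obstacle: the boundary values $x_i=0$ are handled by continuity, and the interior computation is essentially one line once the division is made. A workable alternative would be a direct Hessian computation (showing $\nabla^{2} f$ is negative semidefinite), but that route requires more bookkeeping and seems heavier than necessary; another alternative, using that $f$ is positively homogeneous of degree one and reducing concavity to superadditivity $f(x+y)\geq f(x)+f(y)$, again ultimately rests on the same AM-GM step.
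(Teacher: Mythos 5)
Your proof is correct, but it takes a genuinely different route from the paper. The paper verifies concavity via the second-order criterion: it computes the Hessian of $f$ on the open orthant and shows $y^T\nabla^2 f(x)\,y=\bigl(\bigl(\sum_i\alpha_i y_i/x_i\bigr)^2-\sum_i\alpha_i (y_i/x_i)^2\bigr)f(x)\le 0$ using Jensen's inequality for $w\mapsto w^2$, and it handles boundary points where some $x_i=0$ by a direct monotonicity computation rather than by your continuity argument. You instead work from the zeroth-order definition: normalize the chord inequality by $f(z)>0$ and apply weighted AM--GM twice, then pass to the boundary by continuity (which is legitimate, since each $x_i^{\alpha_i}$ is continuous at $0$ because $\alpha_i>0$, and concavity on the interior of a convex set plus continuity on the closure gives concavity on the closure). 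Your argument is shorter and avoids all calculus, at the cost of importing weighted AM--GM as a black box; the paper's argument is self-contained modulo the elementary Jensen step $\bigl(\sum_i\alpha_i w_i\bigr)^2\le\sum_i\alpha_i w_i^2$, but requires the gradient and Hessian bookkeeping you anticipated. The two proofs ultimately rest on the same convexity fact for the weights $\alpha_i$, packaged differently.
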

\proof
Note that for any $x,y\in\RR^n_+$, if for some $i$, $x_i=0$ then for any $\lambda\in[0,1]$, 
\begin{align*}
\lambda f(x)+(1-\lambda)f(y)=&(1-\lambda)f(y)\\
=&(1-\lambda)\prod_{i=1}^ny_i^{\alpha_i}=\prod_{i=1}^n\left((1-\lambda)y_i\right)^{\alpha_i}\\
\le &\prod_{i=1}^n\left(\lambda_ix_i+(1-\lambda)y_i\right)^{\alpha_i}= f(\lambda\bx+(1-\lambda)y).
\end{align*}
Thus, it is enough to show that $\nabla^2f(x)$ is a negative semi-definite matrix for $\bx>0$. Note that
\begin{align*}
\frac{\partial f}{\partial x_i}=&\frac{\alpha_i}{x_i}f(x),~~\text{for }i=1,\ldots,n\\
\frac{\partial^2 f}{\partial x_i^2}=&\frac{\alpha_i(\alpha_i-1)}{x_i^2}f(x),~~\text{for }i=1,\ldots,n\\
\frac{\partial^2 f}{\partial x_i\partial x_j}=&\frac{\alpha_i\alpha_j}{x_ix_j}f(x),~~\text{for }i,j=1,\ldots,n,~i\ne j.
\end{align*}
Consider any $y\in\RR^n$. Then
\begin{align*}
y^T\nabla^2f(x)y&=\left(\sum_i\alpha_i(\alpha_i-1)\frac{y_i^2}{x_i^2}+\sum_{i\ne j}\alpha_i\alpha_j\frac{y_iy_j}{x_ix_j}\right)f(x)\\
&=\left(\left(\sum_i\alpha_i\frac{y_i}{x_i}\right)^2-\sum_i\alpha_i\left(\frac{y_i}{x_i}\right)^2\right)f(x)\le 0,
\end{align*}
 where the last inequality follows from Jensen's inequality applied to the convex function $f(w)=w^2$. 
 
\qed
 
Given $t\in \RR$, let us replace the max operator in the system $M[r_x]\ge t$ by the softmax approximation:
\begin{align}\label{e1-1}
\log_b\sum_{u \mid (v, u) \in E}b^{r(v,u)+x(v)-x(u)}&\ge t, &\text{for } v \in V_W,\\
r(v,u)+x(v)-x(u)&\ge t, &\text{for $u$ s.t. $(v, u) \in E$ for }v \in V_B,  \label{e1-2}\\
\sum_{u \mid (v, u) \in E}   p(v,u)(r(v,u)+x(v)-x(u))&\ge t, &\text{for } v \in V_R,\label{e1-3}
\end{align}
where the constant $b$ will be determined later.
Defining the new variables $y(v):=b^{-x(v)}$, we can rewrite \raf{e1-1}-\raf{e1-3} as follows:

\begin{align}\label{e1-4}
\sum_{u \mid (v, u) \in E}b^{r(v,u)}y(u)&\ge b^ty(v), &\text{for } v \in V_W,\\
b^{r(v,u)}y(u)&\ge b^ty(v), &\text{for $u$ s.t. $(v, u) \in E$ for }v \in V_B,  \label{e1-5}\\
\prod_{u \mid (v, u) \in E}(b^{r(v,u)}y(u))^{p(v,u)}&\ge b^ty(v), &\text{for } v \in V_R.\label{e1-6}
\end{align}
Note that $y(v)>0$ if and only if $x(v)$ is finite. In fact, since we may assume by Theorem~\ref{t2} that $\|x\|_{\infty}\leq L$, we may add also the inequalities:
\begin{align*} 
b^{-L}\le y(v)\le b^{L}, \qquad\text{ for } v \in V.
\end{align*}
Note that, without the lower bounds $y(v)\ge b^{-L}$, the system \raf{e1-4}-\raf{e1-6} is always feasible. As we shall see later, it will be necessary to test the feasibility of the system with  $y(v)>0$ for  some $v\in V$. For convenience, let us write more generally the following set of upper and lower bounds, where $V'\subseteq V$ is to be chosen later:
\begin{align}\label{e1-7}
	0\le y(v)\le b^{L}, \qquad\text{ for } v \in V, \text{ and }
	y(v) \ge b^{-L} , \qquad\text{ for } v \in V'. 
\end{align}

Similarly, we replace the min operator in the system $M[r_x]\le t$ by the softmin approximation:
\begin{align}\label{e2-1}
r(v,u)+x(v)-x(u)&\le t, &\text{for $u$ s.t. $(v, u) \in E$ for }v \in V_W,  \\
-\log_b\sum_{u \mid (v, u) \in E}b^{-r(v,u)-x(v)+x(u)}&\le t, &\text{for } v \in V_B,\label{e2-2}\\
\sum_{u \mid (v, u) \in E}   p(v,u)(r(v,u)+x(v)-x(u))&\le t, &\text{for } v \in V_R,\label{e2-3}
\end{align}
and defining the new variables $y(v):=b^{x(v)}$, we can rewrite \raf{e2-1}-\raf{e2-3} as follows:

\begin{align}\label{e2-4}
b^{-r(v,u)}y(u)&\ge b^{-t}y(v), &\text{for $u$ s.t. $(v, u) \in E$ for }v \in V_W,  \\
\sum_{u \mid (v, u) \in E}b^{-r(v,u)}y(u)&\ge b^{-t}y(v), &\text{for } v \in V_B,\label{e2-5}\\
\prod_{u \mid (v, u) \in E}(b^{-r(v,u)}y(u))^{p(v,u)}&\ge b^{-t}y(v), &\text{for } v \in V_R,\label{e2-6}
\end{align}
together with the lower and upper bounds:
\begin{align}\label{e2-7}
0\le y(v)\le b^{L}, \qquad\text{ for } v \in V,\text{ and }
y(v) \ge b^{-L} , \qquad\text{ for } v \in V'. 
\end{align}
\section{Solving the convex programs}
We will use the ellipsoid method \cite{K80,K84,GLS88}. For this we need to show that the separation problem can be solved in polynomial time. For convenience, let us consider the following relaxation of the convex programs~\raf{e1-4}-\raf{e1-7} and~\raf{e2-4}-\raf{e2-7}:
\begin{align}\label{e1-4-}
\sum_{u \mid (v, u) \in E}b^{r(v,u)}y(u)&\ge b^{t} y(v)-\delta, &\text{for } v \in V_W,\\
b^{r(v,u)}y(u)&\ge b^{t} y(v)-\delta, &\text{for $u$ s.t. $(v, u) \in E$ for }v \in V_B,  \label{e1-5-}\\
\prod_{u \mid (v, u) \in E}(b^{r(v,u)}y(u))^{p(v,u)}&\ge b^{t}y(v)-\delta, &\text{for } v \in V_R.\label{e1-6-}
\end{align}
\begin{align}\label{e1-7-}
0\le y(v)\le b^{L}+\delta, \qquad\text{ for } v \in V,\text{ and }
b^{L}y(v)\ge 1, \qquad\text{ for } v \in V'. 
\end{align}

\begin{align}\label{e2-4-}
\sum_{u \mid (v, u) \in E}b^{-r(v,u)}y(u)&\ge b^{-t} y(v)-\delta, &\text{for } v \in V_W,\\
b^{-r(v,u)}y(u)&\ge b^{-t} y(v)-\delta, &\text{for $u$ s.t. $(v, u) \in E$ for }v \in V_B,  \label{e2-5-}\\
\prod_{u \mid (v, u) \in E}(b^{-r(v,u)}y(u))^{p(v,u)}&\ge b^{-t} y(v)-\delta, &\text{for } v \in V_R.\label{e2-6-}
\end{align}
\begin{align}\label{e2-7-}
0\le y(v)\le b^{L}+\delta, \qquad\text{ for } v \in V,\text{ and }
b^{L}y(v) \ge 1, \qquad\text{ for } v \in V'. 
\end{align}
where $\delta>0$ is a rational number that will be chosen appropriately. Let $K$ and $K_\delta$ be the set of $y\in\RR^{E}$ satisfying \raf{e1-4}-\raf{e1-7} and
\raf{e1-4-}-\raf{e1-7-}, respectively. Similarly, Let $K'$ and $K'_\delta$ be the set of $y\in\RR^{E}$ satisfying \raf{e2-4}-\raf{e2-7} and
\raf{e2-4-}-\raf{e2-7-}, respectively. 

In our application, we will set $b=n^{4\Lambda}$ and $t\in[0,U]$ to be a rational number with denominator $\Lambda:=\sqrt{k}2^{k/2}D^{k+1}$. In particular, $b^{\pm t}$ is a rational number of bit length $\langle b^{\pm t}\rangle=O(\Lambda \log n)$. Also by assuming without loss of generality (by scaling $r$ and replacing $U$ by $UD$) that $r(v,u)$ is a multiple of $D$, $b^{\pm r(v,u)}$ is 
a rational number of bit length $\langle b^{\pm r(v,u)}\rangle=O(\Lambda UD\log n)$.
\begin{claim}\label{cl1}
For $0<\epsilon\le b^{-t}\delta$ and any $y\in K$ (resp., $y\in K'$), the box $\{y'\in\RR^n\mid~y\le y'\le y+\epsilon\cdot\bone\}$ is contained in $K_\delta$ (resp., $K'_{\delta}$), where $\bone$ is the $n$-dimensional vector of all ones. In particular, if $K\neq\emptyset$ (resp., $K'\neq\emptyset$) then $K_\delta$ (resp., $K'_\delta$) is full-dimensional. 
\end{claim}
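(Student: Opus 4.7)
The argument will rest on two simple observations: (i) every left-hand side in the systems~\raf{e1-4}--\raf{e1-6} and~\raf{e2-4}--\raf{e2-6} is a coordinate-wise nondecreasing function of $y$, because each coefficient $b^{\pm r(v,u)}$ and each exponent $p(v,u)$ is positive (so a sum, a single term, and a weighted geometric mean of $y$-entries are all monotone in every entry of $y$); and (ii) the hypothesis $\epsilon \le b^{-t}\delta$ ensures that $b^{t}\epsilon \le \delta$, which is exactly enough to absorb the change in the right-hand side $b^{t}y(v)$ induced by perturbing $y(v)$ by at most $\epsilon$.

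Concretely, I would fix $y \in K$ and any $y'$ with $y \le y' \le y + \epsilon\bone$, and verify each block of~\raf{e1-4-}--\raf{e1-7-}. For a white vertex $v$, monotonicity of the LHS together with $y' \ge y$ yields
\[
\sum_{u \mid (v,u)\in E} b^{r(v,u)} y'(u) \;\ge\; \sum_{u \mid (v,u)\in E} b^{r(v,u)} y(u) \;\ge\; b^{t}y(v) \;\ge\; b^{t}y'(v) - b^{t}\epsilon \;\ge\; b^{t}y'(v) - \delta,
\]
establishing~\raf{e1-4-}. The identical chain, with the sum replaced by a single term $b^{r(v,u)}y(u)$ for black vertices or by the weighted geometric mean $\prod_u (b^{r(v,u)}y(u))^{p(v,u)}$ for random vertices, handles~\raf{e1-5-} and~\raf{e1-6-}. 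The box constraints~\raf{e1-7-} are immediate: $y'(v) \le b^{L}+\epsilon \le b^{L}+\delta$ since $\epsilon \le b^{-t}\delta \le \delta$ (recall $t \ge 0$ and $b > 1$), while the lower bounds $y'(v) \ge y(v) \ge 0$ and $y'(v) \ge b^{-L}$ for $v \in V'$ are inherited from $y \in K$.

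The argument for $K'$ is entirely symmetric, with $t$ replaced by $-t$ throughout; the slack condition needed there is $b^{-t}\epsilon \le \delta$, which is actually weaker than the hypothesis $\epsilon \le b^{-t}\delta$ (as $b^{-2t} \le 1$), so the same $\epsilon$ works. Full-dimensionality is then immediate: the axis-aligned box of side $\epsilon > 0$ based at any feasible $y$ lies inside $K_\delta$ (resp.\ $K'_\delta$) and has nonempty interior in $\RR^{V}$.

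I do not foresee a real obstacle; the only step worth flagging is the random-position inequality, where one must rely on coordinate-wise monotonicity of the weighted geometric mean (a one-line observation, since each factor $y(u)^{p(v,u)}$ is nondecreasing in $y(u)\ge 0$) rather than on the concavity from Fact~\ref{concave}, which is not needed for this claim.
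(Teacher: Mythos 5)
Your proof is correct and follows essentially the same route as the paper's: both rest on the coordinate-wise monotonicity of the left-hand sides of \raf{e1-4-}--\raf{e1-6-} together with the observation that the hypothesis $\epsilon\le b^{-t}\delta$ makes the slack $\delta$ absorb the increase $b^{t}\epsilon$ in the right-hand side. Your explicit treatment of the $K'$ case and of the box constraints only spells out details the paper leaves implicit.
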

\proof
We prove the statement for $K_\delta$; the proof for $K'_\delta$ is similar. 
Clearly, $K\subseteq K_\delta$ for $\delta\ge 0$. Furthermore, for any $y\in K$ and $S\subseteq V$, the vector $y'$ obtained from $y$ by setting $y'(u):=y(u)+\epsilon$ for $u\in S$ and $y'(u):=y(u)$ for $u\in V\setminus S$ satisfies \raf{e1-4}-\raf{e1-7}. Indeed, the left-hand sides of \raf{e1-4-}-\raf{e1-6-} increase when $y$ is increased in the components corresponding to $S$, while the right-hand sides are at most $b^{t}y(v)+b^{t}\epsilon-\delta\le b^{t}y(v)$. Also by \raf{e1-7}, $y'(v)\le y(v)+\epsilon\le b^L+\epsilon\le b^L+\delta$, so $y'$ satisfies \raf{e1-4-}-\raf{e1-7-}. 
\qed
	
Now we consider the (semi-weak) separation problem for $K_\delta$ (resp., $K_\delta'$):

\medskip

Given $\bar y\in \QQ^n$ and $0<\delta'\in \QQ$, either assert that $\bar y\in K_\delta$ (resp., $\bar y\in K_\delta'$) or find a vector $c\in\QQ^n$ such that $c^Ty+\delta'\ge c^T\bar y$ for all $y\in K_\delta$ (resp., $y\in K_\delta'$).

\medskip

\begin{claim}\label{cl2}
The separation problems for $K_\delta$ and $K_\delta'$ can be solved in $\poly(U,D,\Lambda,\log n,\langle\bar y\rangle,\langle\delta'\rangle)$ time. 	
\end{claim}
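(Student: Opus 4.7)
The plan is to treat the constraints of $K_\delta$ (and symmetrically $K_\delta'$) one family at a time. The inequalities \raf{e1-4-}, \raf{e1-5-}, and \raf{e1-7-} are linear in $y$ with coefficients already represented explicitly as rationals of bit length $\poly(U,D,\Lambda,\log n)$ (since $b^{\pm t}$ and $b^{\pm r(v,u)}$ have bit length $O(\Lambda\log n)$ and $O(\Lambda UD\log n)$ as noted before the claim). For these, evaluating each left-hand side at a given $\bar y\in\QQ^n$ uses only rational arithmetic of polynomial size; any violated inequality is itself an exact separating hyperplane, and finding one costs $\poly(U,D,\Lambda,\log n,\langle\bar y\rangle)$ time.

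The only nontrivial constraints are the geometric-mean inequalities \raf{e1-6-} coming from random positions. For $v\in V_R$ write
\[
g_v(y)=\prod_{u\mid(v,u)\in E}\bigl(b^{r(v,u)}y(u)\bigr)^{p(v,u)}=c_v\prod_{u\mid(v,u)\in E}y(u)^{p(v,u)},
\]
where $c_v=b^{\sum_u r(v,u)p(v,u)}$. By Fact~\ref{concave} the function $g_v$ is concave on $y\ge 0$, so $\{y\ge 0:g_v(y)\ge b^ty(v)-\delta\}$ is convex, and its gradient furnishes a supporting hyperplane: whenever $g_v(\bar y)<b^t\bar y(v)-\delta$ (and $\bar y>0$), the inequality
\[
g_v(\bar y)+\sum_{u\mid(v,u)\in E}\frac{p(v,u)\,g_v(\bar y)}{\bar y(u)}\bigl(y(u)-\bar y(u)\bigr)\ \ge\ b^ty(v)-\delta
\]
is valid on $K_\delta$ and violated at $\bar y$. (The boundary case $\bar y(u)=0$ for some $u$ is handled separately: then $g_v(\bar y)=0$ and a feasibility check is immediate.)

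The main obstacle, and the only one, is that $g_v(\bar y)$ is generically irrational because the rational exponent $p(v,u)=a_{v,u}/D$ is applied to the rational $\bar y(u)$. I would use a standard polynomial-time routine for rational powers, computing $\bar y(u)^{p(v,u)}$ via $\exp\bigl(p(v,u)\ln\bar y(u)\bigr)$ with truncated series, to produce a rational approximation $\tilde g_v$ with $|\tilde g_v-g_v(\bar y)|\le\delta'/4$; the running time is polynomial in $\log(1/\delta')$, $D$, and the bit length of $\bar y$. If $\tilde g_v\ge b^t\bar y(v)-\delta-\delta'/2$ I declare $\bar y$ feasible at this constraint; otherwise I output the approximate supporting hyperplane obtained by similarly approximating each coefficient $p(v,u)g_v(\bar y)/\bar y(u)$ to precision $\delta'/(4n\max_u\bar y(u))$. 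By standard weak-separation accounting the resulting inequality is violated at $\bar y$ and satisfied by every $y\in K_\delta$ up to the allowed slack $\delta'$.

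For $K_\delta'$ the argument is entirely symmetric: the inequalities \raf{e2-4-}, \raf{e2-5-}, \raf{e2-7-} are linear and handled trivially, while \raf{e2-6-} is again of the form ``concave $\ge$ linear'' by Fact~\ref{concave} and separated by the same gradient-plus-approximation scheme. Summing the $O(|E|)$ per-constraint costs yields an oracle running in $\poly(U,D,\Lambda,\log n,\langle\bar y\rangle,\langle\delta'\rangle)$ time, as required.
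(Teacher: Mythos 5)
Your overall architecture matches the paper's: exact linear separation for \raf{e1-4-}, \raf{e1-5-}, \raf{e1-7-}, and a gradient-based supporting hyperplane from Fact~\ref{concave} for the geometric-mean constraints \raf{e1-6-}, with a rational approximation substituted for the irrational quantity $g_v(\bar y)$. But there is a genuine gap in how you decide violation. Your feasibility test is approximate: you declare the constraint satisfied whenever $\tilde g_v\ge b^t\bar y(v)-\delta-\delta'/2$, which only certifies $g_v(\bar y)\ge b^t\bar y(v)-\delta-3\delta'/4$, i.e.\ membership in a slightly enlarged set rather than in $K_\delta$ itself. The separation problem as defined requires you to \emph{assert that $\bar y\in K_\delta$}. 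The paper makes this test exact by raising both sides to the power $D$: since every exponent $p(v,u)$ has denominator $D$ and $r(v,u)$ is scaled to be a multiple of $D$, the violation $g_v(\bar y)<b^t\bar y(v)-\delta$ is equivalent to an inequality between two explicit rationals of pseudo-polynomial bit length, decidable exactly. You need this trick (or a separate argument that a weak-membership oracle suffices for Lemma~\ref{l2-}, which is not what the claim states).

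The second problem is your precision bookkeeping for the separating hyperplane. The output $c$ must satisfy $c^Ty+\delta'\ge c^T\bar y$ for \emph{all} $y\in K_\delta$; if you perturb the gradient coefficients by a two-sided error vector $e$, the damage is of order $\|e\|_1\cdot\|y\|_\infty$, and $\|y\|_\infty$ can be as large as $b^L+\delta$ over $K_\delta$ --- not $\max_u\bar y(u)$. So the precision $\delta'/(4n\max_u\bar y(u))$ you prescribe controls the error at $\bar y$ but not uniformly over $K_\delta$, and the ``standard weak-separation accounting'' hides exactly the step that needs care. The paper sidesteps this by using a \emph{one-sided} approximation $\tilde g\ge g(\bar y)\ge\tilde g-\delta'/A$ and perturbing only that scalar multiplier: the difference $c^Ty-\nabla f(\bar y)^Ty$ is then automatically nonnegative on $K_\delta$ (every term has the right sign since $y\ge 0$), so only the error at $\bar y$ itself, bounded by $\delta'$, must be controlled. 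Your route is repairable by taking precision $\delta'/\bigl(n(b^L+1)\bigr)$, which is still pseudo-polynomially many bits, but as written the argument does not close.
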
 	
\proof
We present the proof for $K_\delta$; the proof for $K_\delta'$ is similar.
Clearly, we can check in $\poly(\langle \bar y\rangle,\langle b^t \rangle,\langle \delta\rangle,n)$ time if $\bar y$ satisfies the linear inequalities \raf{e1-4-}, \raf{e1-5-} and \raf{e1-7-}; if one is violated, the corresponding hyperplane defines a (exact) separator $c\in\QQ^n$ and we are done. Assume therefore that $\bar y$ satisfies \raf{e1-4-}, \raf{e1-5-} and \raf{e1-7-}.
Let us now consider an inequality of the form \raf{e1-6-} corresponding to $v\in V$ violated by $\bar y$. Let $f(y)=\prod_{u \mid (v, u) \in E}(b^{r(v,u)}y(u))^{p(v,u)}-b^{t}y(v):=A\cdot g(y)-B\cdot y(v)$, where $A:=\prod_{u \mid (v, u) \in E}b^{r(v,u)p(v,u)}$, $B:=b^t$ and $g(y):=\prod_{u \mid (v, u) \in E}y(u)^{p(v,u)}$. Note that $f(\bar y)< -\delta$ if and only if $A^D\cdot g(\bar y)^D <(B\cdot \bar y(v)-\delta)^D$, which can be checked in $\poly(\langle \bar y\rangle,\langle \delta\rangle,n,\log D)$ time, as both $A^D$ and $B^D$ are non-negative integers\footnote{in case of $K_\delta'$, they are rational numbers of denominator at most $n^{\Lambda D^2U}$}.
Without loss of generality\footnote{For this part of the proof we can consider the restriction of $y$ to the set of positions reachable from  $v$ by one move. We can also replace parallel edges by one edge; if $v$ is a position of chance then the transition probability of this edge is the sum of the transition probabilities of all corresponding parallel edges.}, we assume that for every $u\in V$ there is exactly one edge $(v,u)\in E$. Then $\nabla f(y):=A\cdot\left(\frac{p(v,u)}{y(u)}:~u \in V\right) g(y)- B \bone_v$, where $\bone_v$ is the unit dimensional vector with $1$ in position $v$. Then the inequality
$
-\delta\le f(y)\le f(\bar y)+\nabla f(\bar y)^T(y-\bar y)<-\delta+\nabla f(\bar y)^T(y-\bar y),
$
valid for all $y\in K_\delta$ by concavity of $f(y)$, 
gives a separating inequality: 
\begin{equation}\label{e16-}
\nabla f(\bar y)^Ty>\nabla f(\bar y)^T\bar y~~~\text{ for all $y\in K_\delta$}.
\end{equation}

Note that the vector $\nabla f(\bar y)$ can be irrational (it is irrational whenever $g(\bar y)$ is). We define a rational approximation $\tilde g$ such that $\tilde g \ge g(\bar y)\ge \tilde g-\frac{\delta'}{A}$ and $ c:=A\cdot\left(\frac{p(v,u)}{\bar y(u)}:~u \in V\right) \tilde g- B \bone_v$. 
Since $r(v,u)$ is assumed to be integer multiple of $D$, $A$ is an integer and hence $\tilde g$ is a rational number of bit length $\langle \tilde g\rangle=\langle A\rangle+\langle\delta'\rangle$. It follows also that $c$ is a rational vector of bit length $\poly(UD\Lambda\log n,\langle\bar y\rangle,\langle\delta'\rangle)$.
Note that 
\begin{equation}\label{e12-}
c^Ty-\nabla f(\bar y)^Ty=A\cdot\left(\frac{p(v,u)}{\bar y(u)}:~u \in V\right)^Ty\cdot(\tilde g- g(\bar y))\ge 0 \text{ for all }y\in K_{\delta},
\end{equation}
while
\begin{align}\label{e14-}
c^T\bar y-\nabla f(\bar y)^T\bar y&=A\cdot\left(\frac{p(v,u)}{\bar y(u)}:~u \in V\right)^T\bar y\cdot(\tilde g- g(\bar y))\le \delta'
\end{align}
It follows from \raf{e16-}, \raf{e12-}, and \raf{e14-} that $c^Ty+\delta'\ge c^T\bar y$ for all $y\in K_{\delta}$.   
\qed
 
\begin{lemma}\label{l2-}
	Given $t\in \RR$ and $\delta\in (0,1)$ we can decide in time $\poly(n,U,D^k,\log\frac{1}{\delta})$ if the system \raf{e1-4}-\raf{e1-7} (resp., \raf{e2-4}-\raf{e2-7}) is infeasible, or find $y(v)\in[b^{-L},b^L+\delta]$, for $v\in V'$ and $y(v)\in[0,b^L+\delta]$, for $v\in V\setminus V'$, such that the left hand sides of \raf{e1-4}-\raf{e1-6} (resp., \raf{e2-4}-\raf{e2-6}) are at least $b^ty(v)-\delta$ (resp., $b^{-t}y(v)-\delta$), for all $v\in V$.
\end{lemma}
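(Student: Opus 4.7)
The plan is to run the ellipsoid method of Gr\"otschel--Lov\'asz--Schrijver \cite{GLS88} on each of the relaxed convex sets $K_\delta$ and $K'_\delta$, using the semi-weak separation oracle provided by Claim~\ref{cl2}. I will describe the argument for $K_\delta$; the argument for $K'_\delta$ is identical once $b^{t}$ is replaced by $b^{-t}$ and the roles of the white and black constraints are swapped.

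First, I would fix the outer region in which the ellipsoid lives: by \raf{e1-7-}, every $y\in K_\delta$ lies in the box $[0,\,b^L+\delta]^{n}$, so $K_\delta$ is contained in a Euclidean ball of radius $R=O(\sqrt{n}\,b^L)$ centered at $\tfrac{1}{2}(b^L+\delta)\bone$. Since $\log b = 4\Lambda\log n$ and $L=nUk(2D)^{k}$, we have $\log R = \poly(n,U,D^{k})$. Next, Claim~\ref{cl1} supplies the complementary inner-volume bound: whenever $K\neq\emptyset$, the set $K_\delta$ contains an axis-aligned box of side $\epsilon:=b^{-t}\delta$, hence a Euclidean ball of radius $r:=\epsilon/2$, with $\log(1/r)=\poly(n,U,D^{k},\log(1/\delta))$. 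Accordingly, the ellipsoid method either returns a point of $K_\delta$ after $O(n^{2}\log(R/r))=\poly(n,U,D^{k},\log(1/\delta))$ iterations, or shrinks the current ellipsoid to volume below $r^{n}$; in the latter case, the contrapositive of Claim~\ref{cl1} lets us safely declare $K=\emptyset$, i.e., that the original system \raf{e1-4}-\raf{e1-7} is infeasible. Each iteration performs one invocation of the separator from Claim~\ref{cl2} on a query vector $\bar y$ whose encoding length stays polynomial throughout the run, so each call costs $\poly(n,U,D^{k},\log(1/\delta))$.

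The one subtlety to address is that Claim~\ref{cl2} only furnishes an approximate cut of the form $c^{\top}y+\delta'\ge c^{\top}\bar y$ valid on $K_\delta$, and in principle such a cut could clip a small sliver of $K_\delta$. The standard remedy is to pick the per-call tolerance $\delta'$ as a sufficiently small positive rational, of encoding length $\poly(n,U,D^{k},\log(1/\delta))$, so that the cumulative effect of the approximate cuts over the polynomially many ellipsoid rounds cannot excise the inscribed ball guaranteed by Claim~\ref{cl1}. Calibrating $\delta'$ this way is the main technical step, and it is the only place where the approximate nature of the oracle enters; once this is done, any $\bar y$ that the oracle declares to lie in $K_\delta$ is truly a member of $K_\delta$ and hence automatically satisfies the coordinate bounds $y(v)\in[b^{-L},b^L+\delta]$ for $v\in V'$ and $y(v)\in[0,b^L+\delta]$ for $v\in V\setminus V'$ from \raf{e1-7-}, together with the relaxed inequalities \raf{e1-4-}-\raf{e1-6-}, which is exactly the output demanded by the lemma.
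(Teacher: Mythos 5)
Your proposal is correct and follows essentially the same route as the paper: run the ellipsoid method with the semi-weak separation oracle of Claim~\ref{cl2}, use the box bound \raf{e1-7-} for the circumscribing ball and Claim~\ref{cl1} for the inner volume lower bound $(b^{-t}\delta)^n$, and declare $K=\emptyset$ when the volume threshold is undercut. The only cosmetic difference is that you spell out the calibration of $\delta'$ against the accumulated error of the approximate cuts, which the paper subsumes by citing Theorem~3.2.1 of \cite{GLS88} with $\delta'=2^{-O(N)}$.
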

\begin{proof} 
Let us consider the equivalent system~\raf{e1-4}-\raf{e1-7}; the proof for~\raf{e2-4-}-\raf{e2-7-} is similar. 
Given a polynomial-time algorithm for the separation problem for the convex set $K_\delta$, a circumscribing ball of radius $H$ for $K_\delta$, and any $\epsilon'>0$, the ellipsoid method terminates in $N:=O(n\log\frac{1}{\epsilon'}+n^2|\log H|)$ calls to the separation algorithm using $\delta'=2^{-O(N)}$, and either (i) finds a vector $y\in K_\delta$, or (ii) asserts that $\vol(K_\delta)\le\epsilon'$;  
see, e.g., Theorem~3.2.1 in \cite{GLS88}. In the first case, we get a vector $y$ satisfying the conditions in the statement of the lemma. In the second case, we conclude that $K_\delta$ and hence $K$ is empty if $\epsilon'<(b^{-t}\delta)^n$. Indeed by Claim~\ref{cl1}, if $K\neq\emptyset$ and $\epsilon:=b^{-t}\delta$, then $\vol(K_\delta)\ge\epsilon^n>\epsilon'$, given a contradiction to the assertion in (ii). 

By \raf{e1-7-}, the radius of the bounding ball can be chosen as $H:=2b^L$.  
Furthermore, the ellipsoid method works only with numbers having precision of $O(N)$ bits. By Claim~\ref{cl2}, the separation problem can be solved in time $\poly(n,U,D^k,\log\frac{1}{\delta})$. 
\qed	
\end{proof}

\begin{remark}\label{r1}
By raising to inequalities~\raf{e1-6} and \raf{e2-6} to power $D$, we obtain systems of polynomial inequalities. Khachiyan \cite{K83,K84} gave a polynomial-time algorithm for (approximately) solving a system of convex polynomial inequalities. However, it is not possible to use this algorithm directly to solve the convex programs~\raf{e1-4}-\raf{e1-7} and~\raf{e2-4}-\raf{e2-7}, since the polynomials obtained after raising inequalities~\raf{e1-6} and \raf{e2-6} to power $D$ are not necessarily convex. For instance, the function $\sqrt{xy}-z$ is concave for $x,y,z\in\RR_+$, while the function $xy-z^2$ is not.	
\end{remark}

\section{A Pseudo-polynomial algorithm for $k=O(1)$} \label{sec111}
Let $\cG$ be a BWR-game. Let $t_{\max}:=\max_{v\in V}\mu^{}_{\cG}(v)$ and $t_{\min}:=\min_{v\in V}\mu^{}_{\cG}(v).$ Define the {\it top} and {\it bottom} classes of $\cG$ as $\cT:=\{v\in V~|~\mu^{}_{\cG}(v)=t_{\max}\}$ and $\cB:=\{v\in V~|~\mu^{}_{\cG}(v)=t_{\min}\}$, respectively. 

\begin{proposition}\label{p1}
	Top and bottom classes necessarily satisfy the following properties.	 
	\begin{itemize}
		\item[(i)] There exists no arc $(v,u)\in E$ such that $v\in (V_W\cup V_R)\cap \cB$, $u\not\in \cB$;
		\item[(ii)] there exists no arc $(v,u)\in E$ such that $v\in( V_B\cup V_R)\cap \cT$, $u\not\in \cT$;
		\item[(iii)] there exists no arc $(v,u)\in E$ such that $v\in V_W\setminus\cT$, $u\in \cT$;
		\item[(iv)] there exists no arc $(v,u)\in E$ such that $v\in V_B\setminus\cB$, $u\in \cB$;		
		\item[(v)] for every $v\in V_W\cap \cT$, there exists an arc $(v,u)\in E$ such that $u\in \cT$;
		\item[(vi)] for every $v\in V_B\cap \cB$, there exists an arc $(v,u)\in E$ such that $u\in \cB$;
		\item[(vii)] for every $v\in (V_B\cup V_R)\setminus \cT$, there exists an arc $(v,u)\in E$ such that $u\not\in \cT$;
		\item[(viii)] for every $v\in (V_W\cup V_R)\setminus \cB$, there exists an arc $(v,u)\in E$ such that $u\not\in \cB$.
	\end{itemize}
\end{proposition}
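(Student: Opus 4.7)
The plan is to reduce all eight claims to the single Bellman-type identity $\mu^{}_{\cG} = M[\mu^{}_{\cG}]$. Since a potential transformation preserves the value function, I would first invoke Theorem~\ref{t2} to pass to an equivalent game in canonical form; property~(C1) then gives exactly this identity. Written out componentwise, it means:
\begin{equation*}
\mu^{}_{\cG}(v) = \begin{cases} \max_{u:(v,u)\in E}\mu^{}_{\cG}(u), & v\in V_W,\\ \min_{u:(v,u)\in E}\mu^{}_{\cG}(u), & v\in V_B,\\ \sum_{u:(v,u)\in E}p(v,u)\,\mu^{}_{\cG}(u), & v\in V_R.\end{cases}
\end{equation*}

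With this identity in hand, each item is essentially a one-line check exploiting that $\cT$ and $\cB$ are the top and bottom level sets of $\mu^{}_{\cG}$. For (i), if $v\in V_W\cap\cB$ and $(v,u)\in E$, then $t_{\min}=\mu^{}_{\cG}(v)\ge\mu^{}_{\cG}(u)\ge t_{\min}$, forcing $u\in\cB$; while for $v\in V_R\cap\cB$, an average of values all $\ge t_{\min}$ that equals $t_{\min}$ forces every neighbor into $\cB$ (using $p(v,u)>0$). For (iii), an arc from $v\in V_W$ into $\cT$ would lift $\mu^{}_{\cG}(v)$ up to $t_{\max}$ via the max identity, contradicting $v\notin\cT$. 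Item (v) is immediate since the maximum over neighbors is attained at some $u$, which then lies in $\cT$. Item (vii) is dual: for $v\in V_B\setminus\cT$ the minimum identity is attained at some neighbor $u$ with $\mu^{}_{\cG}(u)=\mu^{}_{\cG}(v)<t_{\max}$; for $v\in V_R\setminus\cT$, not all neighbors can have value $t_{\max}$ or the average would as well. Items (ii), (iv), (vi), (viii) are the mirror images of (i), (iii), (v), (vii), obtained by swapping the roles of \Max\ and \Min\ (equivalently, $V_W\leftrightarrow V_B$ and $\cT\leftrightarrow\cB$), so they follow by identical arguments.

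I do not anticipate any serious obstacle. The only slightly delicate ingredient is the random-vertex argument in (i)--(ii) and (vii)--(viii), which uses the strict positivity $p(v,u)>0$ of all outgoing transition probabilities: an average of numbers each at least (at most) $t$ that equals $t$ forces every summand with positive weight to equal $t$. Everything else follows from direct case analysis on the canonical-form identity, with the $V_W$/$V_B$ symmetry cutting the real work in half.
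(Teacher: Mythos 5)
Your proposal is correct and takes essentially the same route as the paper: the paper's proof likewise invokes Theorem~\ref{t2} and observes that any violation of (i)--(viii) would contradict the value equations (C1), i.e.\ the identity $\mu^{}_{\cG}=M[\mu^{}_{\cG}]$. You merely spell out the case analysis (including the strict-positivity argument at random positions) that the paper leaves implicit.
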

\proof
All claims follow from the existence of a canonical form for $\cG$, by Theorem \ref{t2}. Indeed, the existence of arcs forbidden by (i), (ii), (iii) and (iv), or the non-existence of arcs required by (v), (vi), (vii) and (viii) would violate the value equations (C1) of the canonical form.
\qed

\begin{lemma}\label{l3-1}
Consider the convex program defined by~\raf{e1-4}-\raf{e1-6} (resp., ~\raf{e2-4}-\raf{e2-6}). Then for $t:=t_{\max}$ (resp., $t:=t_{\min}$), there is a feasible solution with $y(v)\ge b^{-L}$ for all $v\in\cT$ (resp., $v\in\cB$). 	
\end{lemma}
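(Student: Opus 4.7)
The plan is to construct an explicit feasible solution directly from the canonical form guaranteed by Theorem~\ref{t2}. Let $x^*\in\RR^V$ be the potential vector bringing $\cG$ into canonical form, so $\|x^*\|_\infty\le L$ and $\mu^{}_\cG(v)=M[r_{x^*}](v)$ for every $v$. For the top-class case, define
\[
y^*(v):=\begin{cases} b^{-x^*(v)} & \text{if } v\in\cT,\\ 0 & \text{if } v\notin\cT.\end{cases}
\]
Since $\|x^*\|_\infty\le L$, we immediately have $y^*(v)\in[b^{-L},b^{L}]$ for all $v\in\cT$, so the additional bound $y^*(v)\ge b^{-L}$ on $\cT$ is met.

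Next I would verify~\raf{e1-4}-\raf{e1-6} at each position with $t=t_{\max}$. For any $v\notin\cT$ the right-hand side reduces to $b^{t}\cdot 0=0$, so the inequality is trivial since $y^*\ge 0$. For $v\in\cT$ the work splits by type: (i) if $v\in V_W\cap\cT$, the canonical-form equation $\max_u r_{x^*}(v,u)=t$ supplies a $u^*$ with $r_{x^*}(v,u^*)=t$, and condition (C2) forces $u^*\in\cT$, so $b^{r(v,u^*)}y^*(u^*)=b^{t}y^*(v)$, and the sum in~\raf{e1-4} already meets the bound with this single term; (ii) if $v\in V_B\cap\cT$, Proposition~\ref{p1}(ii) guarantees every out-arc lands in $\cT$, and the equation $\min_u r_{x^*}(v,u)=t$ gives $r_{x^*}(v,u)\ge t$ for each such $u$, which rewrites as $b^{r(v,u)}y^*(u)\ge b^{t}y^*(v)$, i.e.~\raf{e1-5}; (iii) if $v\in V_R\cap\cT$, Proposition~\ref{p1}(ii) again ensures all neighbors lie in $\cT$, so every factor $y^*(u)$ in the product of~\raf{e1-6} is strictly positive, and the canonical identity $\sum_u p(v,u)r_{x^*}(v,u)=t$ exponentiates to $\prod_u (b^{r(v,u)}y^*(u))^{p(v,u)}=b^{t}y^*(v)$.

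For the symmetric bottom-class statement, I would set $y^*(v):=b^{x^*(v)}$ for $v\in\cB$ and $y^*(v):=0$ elsewhere, and check~\raf{e2-4}-\raf{e2-6} with $t=t_{\min}$ in the same fashion, now invoking Proposition~\ref{p1}(i) for the $V_W$ and $V_R$ cases (every out-arc from $(V_W\cup V_R)\cap\cB$ stays inside $\cB$) and Proposition~\ref{p1}(vi) together with (C2) for the $V_B\cap\cB$ case (to produce a neighbor $u^*\in\cB$ achieving the min).

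The only subtle point is that assigning $y^*(v)=0$ outside the target class must not destroy feasibility at positions inside it, in particular for random positions where the product~\raf{e1-6} collapses to $0$ as soon as any factor vanishes. The structural parts of Proposition~\ref{p1} cited above---precisely (ii) for $\cT$ and (i) for $\cB$---are exactly what rule out such ``leakage'' out of the top/bottom class, and they are what make the two-valued construction work.
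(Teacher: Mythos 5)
Your proof is correct and follows essentially the same route as the paper's: an explicit two-valued construction ($y=b^{\mp x}$ on the class, $y=0$ off it), with Proposition~\ref{p1} ruling out leakage at the Black/Random (resp.\ White/Random) positions. The only difference is that you take the canonical potential of the whole game $\cG$ and use condition (C2) to locate an optimizing neighbour inside $\cT$ (resp.\ $\cB$), whereas the paper takes the canonical potential of the induced subgame $\cG[\cT]$ (resp.\ $\cG[\cB]$) and invokes Fact~\ref{softmax-softmin}; both variants are valid.
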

\proof
Consider the game $\cG[\cT]$ (resp., $\cG[\cB]$) induced by the top class $\cT$ (resp., the bottom class $\cB$).  
Let $x\in\RR^\cT$ (resp., $x\in\RR^\cB$) be the potential vector guaranteed by Theorem~\ref{t2} for the game $\cG[\cT]$ (resp., $\cG[\cB]$). Set $y(v):=b^{-x(v)}$ for $v\in\cT$ (resp., $y(v):=b^{x(v)}$ for $v\in\cB$) and $y(v)=0$ for $v\in V\setminus\cT$ (resp., $v\in V\setminus\cB$). Then $y(v)\ge b^{-L}$ for all $v\in\cT$ (resp., $v\in\cB$). It is easy to verify by Proposition~\ref{p1} that the system is feasible. Indeed, \raf{e1-4} is satisfied for every position $v\in V_W\cap\cT$ (resp., $v\in V_B\cap\cB$) by the definition of $x$ and Fact~\ref{softmax-softmin}: 
$$t_{\max}\le \max_{u\mid (v,u)\in E}(r(v,u)+x(v)-x(u))\le\log_b\sum_{u \mid (v, u) \in E}b^{r(v,u)+x(v)-x(u)}$$ 

$$\text{(resp., }-t_{\min}\le -\min_{u\mid (v,u)\in E}(r(v,u)+x(v)-x(u))\le-\log_b\sum_{u \mid (v, u) \in E}b^{-r(v,u)-x(v)+x(u)}\text{ ).}$$ 
Moreover, for $v\in (V_B\cup V_R)\cap \cT$ (resp., $v\in (V_W\cup V_R)\cap \cB$) we have \raf{e1-5} and \raf{e1-6} (resp., \raf{e2-5} and \raf{e2-6}) satisfied by the definition of $x$ and Proposition~\ref{p1}-(ii) (resp., Proposition~\ref{p1}-(i)), while for $v\in V\setminus\cT$ (resp., $v\in V\setminus\cB$) \raf{e1-4}-\raf{e1-6} (resp., \raf{e2-4}-\raf{e2-6}) are trivially satisfied.  
\qed 

\medskip

In the following we set $\delta(t):=\frac{1}{2}b^{-t-L}(1-\frac{1}{n})$, where $\eps:=\frac{1}{\sqrt{k}2^{k/2+1}D^{k+1}}$. Note that $b:=n^{\frac{2}{\eps}}$.
\begin{lemma}\label{l33-2} 
	The values $t_{\max}$ and $t_{\min}$ can be found in time $\poly(n,U,D^k)$.   
\end{lemma}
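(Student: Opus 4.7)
My plan is to find $t_{\max}$ (and symmetrically $t_{\min}$) by binary search over the grid $G_\Lambda := \{0, 1/\Lambda, 2/\Lambda, \ldots, U\}$ of rationals with denominator $\Lambda = \sqrt{k}2^{k/2}D^{k+1}$, using the convex-program feasibility routine of Lemma~\ref{l2-} as an oracle. Lemma~\ref{l1-prob} guarantees $t_{\max}, t_{\min} \in G_\Lambda$, so $O(\log(U\Lambda)) = O(\log U + k\log D)$ rounds of binary search suffice to pin either value down on the grid.

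For a candidate value $t \in G_\Lambda$, the question ``is $t \le t_{\max}$?'' is equivalent to ``does some $v_0 \in V$ satisfy $\mu^{}_\cG(v_0) \ge t$?''. I decide this by looping over $v_0 \in V$ and, for each, calling Lemma~\ref{l2-} on \raf{e1-4}-\raf{e1-7} with $V' = \{v_0\}$ and $\delta = \delta(t)$; I declare $t \le t_{\max}$ iff at least one of the $n$ calls returns an approximately feasible $y$. With $b = n^{4\Lambda}$ and $L = nUk(2D)^k$, one has $\log(1/\delta(t)) = O((t+L)\log b) = \poly(n, U, D^k)$, so each call runs in $\poly(n, U, D^k)$ time, giving the overall bound claimed in the lemma. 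The symmetric system \raf{e2-4}-\raf{e2-7}, together with Lemma~\ref{l2} and the analogue of Lemma~\ref{l3-1} for the bottom class $\cB$, recovers $t_{\min}$ by the same scheme.

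Correctness of the oracle has two directions. Completeness is immediate from Lemma~\ref{l3-1}: if $t \le t_{\max}$ and $v_0 \in \cT$, then \raf{e1-4}-\raf{e1-7} admits an exact (hence $\delta(t)$-approximate) feasible solution with $y(v_0) \ge b^{-L}$. Soundness, that oracle-feasibility implies $\mu^{}_\cG(v_0) \ge t$, is the main obstacle. Given an approximately feasible $y$ with $y(v_0) \ge b^{-L}$, I set $S := \{v : y(v) \ge b^{-L}\}$ and $x(v) := -\log_b y(v)$ on $S$; the tight choice $\delta(t) = \tfrac{1}{2}b^{-t-L}(1 - 1/n)$ is calibrated so that, for such $y$, the $-\delta$ slack in \raf{e1-4-}-\raf{e1-6-} is strictly dominated by the leading term $b^t y(v)$ whenever $v \in S$. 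This forces $S$ to be closed under the out-moves of Black and Random (via \raf{e1-5-} and \raf{e1-6-}) and forces every $v \in V_W \cap S$ to have at least one out-neighbour in $S$ (via \raf{e1-4-}). Taking $\log_b$ of the surviving inequalities and applying Fact~\ref{softmax-softmin} yields $M[r_x](v) \ge t - \log_b n$ for every $v \in S$, so the argument of Lemma~\ref{l1} applied to the sub-game $\cG[S]$ lets \Max\ guarantee a mean payoff of at least $t - \log_b n$ from $v_0$. Since $\log_b n = 1/(4\Lambda)$ is strictly less than the grid spacing $1/\Lambda$, and $\mu^{}_\cG(v_0) \in G_\Lambda$, the inequality $\mu^{}_\cG(v_0) > t - 1/\Lambda$ forces $\mu^{}_\cG(v_0) \ge t$, finishing soundness. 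The hard part is thus calibrating $b$, $\delta(t)$, and the grid resolution so that the unavoidable softmax approximation error stays strictly below the minimum gap between distinct admissible game values.
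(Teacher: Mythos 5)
Your overall scheme (binary search over the value grid, the ellipsoid-based feasibility test of Lemma~\ref{l2-} as the oracle, and the rounding step based on Lemma~\ref{l1-prob}) matches the paper's, but you test the opposite inequality with the opposite program: you ask ``is $t\le t_{\max}$?'' via the softmax system \raf{e1-4}-\raf{e1-7} anchored at a single vertex, $V'=\{v_0\}$, whereas the paper asks ``is $t_{\max}\le t$?'' via the softmin system \raf{e2-4}-\raf{e2-7} with $V'=V$. The paper's choice makes soundness essentially trivial: since $y(v)\ge b^{-L}$ is \emph{enforced for every} $v$, an approximately feasible $y$ yields a finite potential $x=\log_b(2y)$ on all of $V$ with $M[r_x]\le (t+\eps)\bone$, and Lemma~\ref{l2} gives $t_{\max}\le t+\eps$ directly; no support identification or sub-game analysis is needed at this stage (that is deferred to the later lemma computing $\cT$ and $\cB$, after $t_{\max}$ and $t_{\min}$ are known).

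Your soundness argument has a genuine gap at the closure claim. From \raf{e1-5-} at $v\in V_B$ with $y(v)\ge b^{-L}$ you only obtain $y(u)\ge b^{-r(v,u)}\bigl(b^{t}y(v)-\delta\bigr)$, and since $r(v,u)$ can be as large as $UD$ after the scaling that makes rewards multiples of $D$, this lower bound can be of order $b^{t-L-UD}$, far below your threshold $b^{-L}$; already $r(v,u)=1$ pushes $u$ out of $S$. So $S=\{v:~y(v)\ge b^{-L}\}$ is \emph{not} closed under Black's or Random's out-moves, $\cG[S]$ need not be a well-defined sub-game, and the step ``apply the argument of Lemma~\ref{l1} to $\cG[S]$'' collapses. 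Redefining $S$ as $\{v:~y(v)>0\}$ does not help: once $y(v)$ drops below $\delta b^{-t}$ the additive slack swamps the term $b^{t}y(v)$ and the constraints at $v$ force nothing about its out-neighbours. A repair along your lines would have to let the admissible magnitude of $y$ degrade multiplicatively along paths from $v_0$ and shrink $\delta$ to roughly $b^{-t-L-nUD}$ (still of pseudo-polynomial bit length, but requiring a careful induction over reachability); you would also need to note that the exact certificate of Lemma~\ref{l3-1} is stated only for $t=t_{\max}$ and extends to $t\le t_{\max}$ by monotonicity of $b^{t}y(v)$ in $t$. The cleaner fix is to switch to the paper's one-sided test with $V'=V$.
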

\proof
    We only show how to find $t_{\max}$; in a similar fashion we can determine 
    $t_{\min}$. 
	We apply Lemma~\ref{l2-} in a binary search manner to check the feasibility of the system~\raf{e2-4}-\raf{e2-7} for $t\in[0,U]$ and $\delta(t)$ as specified above. 
	Note that, by Lemma~\ref{l1-prob}, $t_{max}\in[0,U]$ can be written as a rational number with denominator at most $\frac{1}{2\eps}$. So we may restrict our search steps to integer multiples of $\frac{1}{2\eps}$. We stop the search when the length of the search interval becomes a constant multiple of $\frac{1}{2\eps}$, and then apply linear search for the remaining small interval.   
	
	Suppose that the convex program~\raf{e2-4}-\raf{e2-7} is infeasible. Then Theorem~\ref{t2} implies that $t_{\max}>t$. On the other hand,	if $y\in\RR^V$ is a $\delta(t)$-approximately feasible solution for~\raf{e2-4}-\raf{e2-7}, then as $\delta(t)\le \frac{1}{2}b^{-L}$, the new vector $y':=2y$ satisfies $y'(v)\in[b^{-L},2b^L+b^{-L}]$ for all $v\in V$. Also, $y'$ satisfies~\raf{e2-4}-\raf{e2-6} within an error of $2\delta(t)$, that is, the left-hand sides of \raf{e2-4}-\raf{e2-6}, when $y$ is replaced by $y'$, are at least $b^{-t}y'(v)-2\delta(t)=b^{-t}y'(v)-b^{-t-L}(1-\frac{1}{n})\ge b^{-t-\log_bn}y'(v)$. Set $x(v):=\log_b y'(v).$ Then $x$ satisfies \raf{e2-1}-\raf{e2-3} with $t$ replaced by $t+\log_bn$. This in turn implies by Fact~\ref{softmax-softmin} that $M[r_{x}]\le (t+2\log_b n)\bone=(t+\eps)\bone$. 
	It follows then from Lemma~\ref{l2} that $t_{\max}\le t+\eps$.
	Recall that we assume both $t$ and $t_{max}$ are multiples of ${2\eps}$; hence, $t_{\max}\le t$.
	
	Since the number of binary search steps is at most $\log \frac{U}{2\epsilon}=O(k\log (UD))$ and each step requires time $\poly(n,U,\log b,\log\frac{1}{\delta})=\poly(n,U,^k)$, the bound on the running time follows.
\qed

\begin{lemma}
	We can find the top class $\cT$ (resp., bottom class $\cB$) in time $\poly(n,U,D^k)$.   
\end{lemma}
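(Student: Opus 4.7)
The plan is to identify $\cT$ by a per-vertex feasibility test. Having $t_{\max}$ from Lemma~\ref{l33-2}, for each $v\in V$ I would invoke the routine of Lemma~\ref{l2-} on the convex program \raf{e1-4-}-\raf{e1-7-} with parameters $V':=\{v\}$, $t:=t_{\max}$, and a sufficiently small $\delta$ with $\log(1/\delta)=\poly(n,U,D^k)$, and declare $v\in\cT$ precisely when the routine returns an approximately feasible $y$ (rather than an infeasibility certificate). Each of the $n$ calls runs in $\poly(n,U,D^k)$ time, so the total time is $\poly(n,U,D^k)$. The bottom class $\cB$ is handled symmetrically using \raf{e2-4-}-\raf{e2-7-}.

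The forward direction $v\in\cT\Rightarrow$ feasibility is essentially Lemma~\ref{l3-1}: it supplies a vector $y$ satisfying \raf{e1-4}-\raf{e1-6} with $y(v')\ge b^{-L}$ for $v'\in\cT$ and $y(v')=0$ off $\cT$, while the upper bounds $y\le b^L$ in \raf{e1-7} follow from $\|x\|_\infty\le L$ (Theorem~\ref{t2}). Hence the relaxed system \raf{e1-4-}-\raf{e1-7-} is feasible a fortiori, so the routine must return an approximate solution.

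For the reverse direction, given an approximately feasible $y$, set $S:=\{v'\in V:y(v')>0\}$ and $x(v'):=-\log_b y(v')$ on $S$. I would first show that with $\delta$ chosen small enough relative to $b^{-t_{\max}-L}$, the constraints \raf{e1-5-} and \raf{e1-6-} force every out-neighbour of each $v'\in (V_B\cup V_R)\cap S$ to lie in $S$, while \raf{e1-4-} guarantees every $v'\in V_W\cap S$ at least one out-neighbour in $S$. Thus $\cG[S]$ is a well-defined BWR-subgame, and any stationary strategy of \Max in $\cG[S]$ is also available to him in $\cG$, giving $\mu^{}_{\cG}(v)\ge\mu^{}_{\cG[S]}(v)$. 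Converting the relaxed inequalities via Fact~\ref{softmax-softmin} yields $M[r_x]\ge(t_{\max}-O(\eps))\bone$ on $S$, so Lemma~\ref{l1} applied to $\cG[S]$ gives $\mu^{}_{\cG[S]}(v)\ge t_{\max}-O(\eps)$. Combined with $\mu^{}_{\cG}(v)\le t_{\max}$ and the denominator-based separation of Lemma~\ref{l1-prob} (applied to the top class of the appropriate sub-game so that all relevant values are rationals with denominator at most $\Lambda$), this pins down $\mu^{}_{\cG}(v)=t_{\max}$, i.e., $v\in\cT$.

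The main obstacle I anticipate is the quantitative coupling between $\delta$, $\eps$, and $b$: $\delta$ must be small enough that (i) the support $S$ is genuinely arc-closed under \Min's and nature's moves, so that $\cG[S]$ really is a game, and (ii) the softmax-plus-slack error stays strictly below the minimum positive gap between $t_{\max}$ and any $\mu^{}_{\cG}(v')<t_{\max}$; yet $\log(1/\delta)$ must remain polynomial in $n,U,D^k$ so that Lemma~\ref{l2-} still runs within budget. These constraints can be met simultaneously by scaling $\eps$ (and hence $b$) by a polynomial factor compared with the choice made for Lemma~\ref{l33-2}, and by then setting $\delta$ analogously to $\delta(t_{\max})$.
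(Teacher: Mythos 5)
Your proposal is correct and follows essentially the same route as the paper: a per-vertex feasibility test of the system \raf{e1-4}--\raf{e1-6} at $t=t_{\max}$ with the lower bound $y(w)\ge b^{-L}$ imposed only at the tested vertex, using Lemma~\ref{l3-1} for the forward direction and, for the reverse direction, the observation that the positive support of an approximately feasible $y$ induces a well-defined subgame in which \Max\ guarantees at least $t_{\max}-O(\eps)$, which rounds up to $t_{\max}$ by rationality. The paper handles the quantitative coupling you flag by the same choice $\delta(t)$ as in Lemma~\ref{l33-2} together with the rescaling $y':=2y$, rather than by shrinking $\eps$ further, but this is only a difference in bookkeeping.
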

\proof
We can check if a vertex $w\in V$ belongs to the top class (resp., bottom class) as follows. We write the convex program~\raf{e1-4}-\raf{e1-6} (resp.,~\raf{e2-4}-\raf{e2-6}) with $t:=t_{max}$ (resp., $t=t_{\min}$) and with the additional constraint that $y(w)\ge b^{-L}$ and $y(v)\ge 0$ for all $v\in V\setminus\{w\}$. Then we check the feasibility of this system.
If the system is infeasible then we know by Lemma~\ref{l3-1} that $w\not\in\cT$ (resp., $w\not\in\cB$).

Suppose, on the other hand, that $y\in\RR^V$ is a $\delta(t_{\max})$-approximately (resp., $\delta(t_{\min})$-approximately) feasible solution for~\raf{e1-4}-\raf{e1-6}  (resp.,~\raf{e2-4}-\raf{e2-6}). Then as in the proof of Lemma~\ref{l33-2}, the new vector $y':=2y$ satisfies $y'(w)\ge b^{-L}>0$, and the left-hand sides of \raf{e1-4}-\raf{e1-6} (resp., \raf{e2-4}-\raf{e2-6}), when $y$ is replaced by $y'$, are at least $b^{t-\log_bn}y'(w)$ (resp., $b^{-t-\log_bn}y'(w)$).

Now we claim that $V^+:=\{v\in V:~y(v)>0\}\subseteq\cT$ (resp., $V^+:=\{v\in V:~y(v)>0\}\subseteq\cB$), which would in turn imply that $w\in\cT$ (resp., $w\in\cB$). Indeed, constraints \raf{e1-4}-\raf{e1-6} (resp.,~\raf{e2-4}-\raf{e2-6}), applied to $y$ replaced by $y'$, imply that (i) if $v\in V_W\cap V^+$ then there exists an arc $(v,u)\in E$ such that $u\in V^+$; (ii) if $v\in (V_B\cup V_R)\cap V^+$ then all arcs $(v,u)\in E$ must that $u\in V^+$ (resp., (i) if $v\in V_B\cap V^+$ then there exists an arc $(v,u)\in E$ such that $u\in V^+$; (ii) if $v\in (V_W\cup V_R)\cap V^+$ then all arcs $(v,u)\in E$ must that $u\in V^+$).
These imply that the game induced by $V^+$ is well-defined and, by Lemma~\ref{l1} (resp., Lemma~\ref{l2}), all its positions have value at least $t_{\max}$ (resp., at most $t_{min}$). 
The lemma follows. 
\qed

Finally, given the top and bottom classes, we can find an optimal pair of strategies in the games induced by $\cT$ and $\cB$, as stated in the next lemma.

\begin{lemma}
We can find optimal pairs of strategies in the games induced by the top class $\cT$ and bottom class $\cB$ in time $\poly(n,U,D^k)$.   
\end{lemma}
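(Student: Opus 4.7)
\proof
The plan is to apply the convex programs \raf{e1-4}-\raf{e1-7} and \raf{e2-4}-\raf{e2-7} to each of the sub-games $\cG[\cT]$ and $\cG[\cB]$ in order to obtain approximate potential transformations, and then to read optimal pure stationary strategies directly off the transformed local rewards. By Proposition~\ref{p1}, each of $\cG[\cT]$ and $\cG[\cB]$ is a well-defined BWR-game with at most $k$ random positions, and each is ergodic with value $t_{\max}$ or $t_{\min}$ respectively, so the convex programming machinery developed in the previous sections applies.

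I will treat $\cG[\cT]$ first; the case of $\cG[\cB]$ is symmetric. First, call Lemma~\ref{l2-} on the system \raf{e1-4}-\raf{e1-7} for $\cG[\cT]$ with $t := t_{\max}$, $V' := \cT$, and $\delta := \delta(t_{\max})$. Lemma~\ref{l3-1} guarantees feasibility, so the call returns an approximately feasible $y^+$; setting $x^+(v) := -\log_b y^+(v)$ and rescaling as in the proof of Lemma~\ref{l33-2} yields a potential satisfying $M[r_{x^+}] \ge (t_{\max} - \eps)\bone$ on $\cT$. Define \Max's strategy $\sigma_W$ by choosing, at each $v \in V_W \cap \cT$, an out-neighbor $u \in \cT$ that maximizes $r(v,u) + x^+(v) - x^+(u)$; by Proposition~\ref{p1}(v) such a $u$ exists. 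Symmetrically, apply Lemma~\ref{l2-} to \raf{e2-4}-\raf{e2-7} with $t := t_{\max}$ to obtain a potential $x^-$, and define \Min's strategy $\sigma_B$ by taking, at each $v \in V_B \cap \cT$, the out-neighbor minimizing $r(v,u) + x^-(v) - x^-(u)$. The same routine, with $t := t_{\min}$, yields an optimal pair for $\cG[\cB]$.

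The main obstacle is verifying that the locally optimal strategies extracted from these approximate potentials are in fact globally optimal. The idea is that, by Lemma~\ref{l1} applied to $x^+$, the strategy $\sigma_W$ guarantees \Max\ an expected mean payoff of at least $t_{\max} - \eps$ against any \Min\ response; symmetrically, by Lemma~\ref{l2} applied to $x^-$, $\sigma_B$ bounds \Max's payoff from above by $t_{\max} + \eps$. Since pure stationary best responses always exist \cite{Gil57,LL69}, and since the value of any pure stationary pair in $\cG[\cT]$ is a rational number whose denominator is bounded (by an argument analogous to Lemma~\ref{l1-prob}) by at most $\Lambda = 1/(2\eps)$, the only admissible rational in the window $[t_{\max} - \eps, t_{\max} + \eps]$ is $t_{\max}$ itself. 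This forces both $\sigma_W$ and $\sigma_B$ to be exactly optimal in $\cG[\cT]$, and the same argument yields optimality in $\cG[\cB]$. The running time is dominated by $O(1)$ invocations of Lemma~\ref{l2-}, totalling $\poly(n, U, D^k)$.
\qed
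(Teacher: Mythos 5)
Your proof takes essentially the same route as the paper's: solve the two convex programs \raf{e1-4}--\raf{e1-7} and \raf{e2-4}--\raf{e2-7} on $\cG[\cT]$ (resp.\ $\cG[\cB]$) with $t=t_{\max}$ (resp.\ $t=t_{\min}$), turn the $\delta(t)$-approximate solutions into potentials $x^+,x^-$ with $M[r_{x^+}]\ge (t_{\max}-\eps)\bone$ and $M[r_{x^-}]\le (t_{\max}+\eps)\bone$, and output the locally optimal strategies, invoking Lemmas~\ref{l1} and~\ref{l2} to certify the guaranteed payoffs. The one point to flag is your closing rounding argument: two distinct rationals with denominators at most $\Lambda$ can be as close as $1/\Lambda^2$, so with $\eps=1/(2\Lambda)$ the window $[t_{\max}-\eps,\,t_{\max}+\eps]$ need not contain $t_{\max}$ as its only admissible value, and you would need $\eps<1/\Lambda^2$ (which changes nothing in the complexity bound) for the isolation argument to close. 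Since the paper itself disposes of this step with ``since $\eps$ is sufficiently small,'' this is a quantitative slip in filling a detail the paper omits, not a difference in approach.
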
 
\proof
We prove the lemma only for $\cT$; the proof for $\cB$ can be done similarly. 
We solve two (feasible) systems, $\cS_1$ defined by~\raf{e1-4}-\raf{e1-7} on $\cG[\cT]$ and $\cS_2$ defined by~\raf{e2-4}-\raf{e2-6} on $\cG[\cT]$, with $t:=t_{\max}$ to within an accuracy of $\delta(t_{\max})$. Let $y^1,y^2\in\RR^\cT$  be the $\delta(t_{\max})$-approximate solutions to $\cS_1$ and $\cS_2$, respectively. By the same arguments as in Lemma~\ref{l33-2}, the corresponding potential vectors $x^1,x^2$ (defined by $x^1(v):=-\log_b (2y^1(v))$ and $x^2(v):=\log_b (2y^2(v))$) ensure that $M[r_{x^1}]\ge (t_{\max}-\eps)\bone$ and $M[r_{x^2}]\le (t_{\max}+\eps)\bone$. Since $\eps$ is sufficiently small, by Lemmas~\ref{l1} and \ref{l2}, the locally optimal strategies defined by the operator $M$ with respect to $x^1$ and $x^2$ give optimal strategies for $\Max$ and $\Min$ in $\cG[\cT]$, respectively.
 \qed

Finally, we obtain Theorem~\ref{t-main} by combining the above lemmas with the algorithm in \cite{BEGM13,BEGM15-arxiv}.




\begin{thebibliography}{BEGM13b}
	
	\bibitem[AM09]{AM09}
	D.~Andersson and P.~B. Miltersen.
	\newblock The complexity of solving stochastic games on graphs.
	\newblock In {\em ISAAC}, pages 112--121, 2009.
	
	\bibitem[BEGM10]{IPCO-2010}
	E.~Boros, K.~Elbassioni, V.~Gurvich, and K.~Makino.
	\newblock A pumping algorithm for ergodic stochastic mean payoff games with
	perfect information.
	\newblock In {\em Proc. 14th IPCO}, pages 341--354, 2010.
	
	\bibitem[BEGM13a]{DGAA13}
	E.~Boros, K.~Elbassioni, V.~Gurvich, and K.~Makino.
	\newblock On canonical forms for zero-sum stochastic mean payoff games.
	\newblock {\em Dynamic Games and Applications}, 3(2):128--161, 2013.
	
	\bibitem[BEGM13b]{BEGM13}
	Endre Boros, Khaled~M. Elbassioni, Vladimir Gurvich, and Kazuhisa Makino.
	\newblock A pseudo-polynomial algorithm for mean payoff stochastic games with
	perfect information and a few random positions.
	\newblock In {\em Automata, Languages, and Programming - 40th International
		Colloquium, {ICALP} 2013, Riga, Latvia, July 8-12, 2013, Proceedings, Part
		{I}}, pages 220--231, 2013.
	
	\bibitem[BEGM15]{BEGM15-arxiv}
	E.~Boros, K.~Elbassioni, V.~Gurvich, and K.~Makino.
	\newblock A pseudo-polynomial algorithm for mean payoff stochastic games with
	perfect information and few random positions.
	\newblock {\em CoRR}, abs/1508.03431, 2015.
	
	\bibitem[BV01a]{BV01}
	E.~Beffara and S.~Vorobyov.
	\newblock Adapting gurvich-karzanov-khachiyan's algorithm for parity games:
	Implementation and experimentation.
	\newblock {Technical Report 2001-020}, Department of Information Technology,
	Uppsala University, available at:
	https://www.it.uu.se/research/reports/$\#$2001, 2001.
	
	\bibitem[BV01b]{BV01a}
	E.~Beffara and S.~Vorobyov.
	\newblock Is randomized gurvich-karzanov-khachiyan's algorithm for parity games
	polynomial?
	\newblock {Technical Report 2001-025}, Department of Information Technology,
	Uppsala University, available at:
	https://www.it.uu.se/research/reports/$\#$2001, 2001.
	
	\bibitem[BV05]{BV05}
	H.~Bj\"{o}rklund and S.~Vorobyov.
	\newblock Combinatorial structure and randomized sub-exponential algorithm for
	infinite games.
	\newblock {\em Theoretical Computer Science}, 349(3):347--360, 2005.
	
	\bibitem[BV07]{BV07}
	H.~Bj\"{o}rklund and S.~Vorobyov.
	\newblock A combinatorial strongly sub-exponential strategy improvement
	algorithm for mean payoff games.
	\newblock {\em Discrete Applied Mathematics}, 155(2):210--229, 2007.
	
	\bibitem[CH08]{CH08}
	K.~Chatterjee and T.~A. Henzinger.
	\newblock Reduction of stochastic parity to stochastic mean-payoff games.
	\newblock {\em Inf. Process. Lett.}, 106(1):1--7, 2008.
	
	\bibitem[CHKN14]{Vienna}
	K.~Chatterjee, M.~Henzinger, S.~Krinninger, and D.~Nanongkai.
	\newblock Polynomial-time algorithms for energy games with special weight
	structures.
	\newblock {\em Algorithmica}, 70(3):457--492, 2014.
	
	\bibitem[CJH04]{CJH04}
	K.~Chatterjee, M.~Jurdzi\'{n}ski, and T.~A. Henzinger.
	\newblock Quantitative stochastic parity games.
	\newblock In {\em SODA '04}, pages 121--130, Philadelphia, PA, USA, 2004.
	Society for Industrial and Applied Mathematics.
	
	\bibitem[EM79]{EM79}
	A.~Eherenfeucht and J.~Mycielski.
	\newblock Positional strategies for mean payoff games.
	\newblock {\em International Journal of Game Theory}, 8:109--113, 1979.
	
	\bibitem[Gil57]{Gil57}
	D.~Gillette.
	\newblock Stochastic games with zero stop probabilities.
	\newblock In A.W.~Tucker M.~Dresher and P.~Wolfe, editors, {\em Contribution to
		the Theory of Games III, in Annals of Mathematics Studies}, volume~39, pages
	179--187. Princeton University Press, 1957.
	
	\bibitem[GKK88]{GKK88}
	V.~Gurvich, A.~Karzanov, and L.~Khachiyan.
	\newblock Cyclic games and an algorithm to find minimax cycle means in directed
	graphs.
	\newblock {\em USSR Computational Mathematics and Mathematical Physics},
	28:85--91, 1988.
	
	\bibitem[GLS88]{GLS88}
	M.~Gr\"{o}tschel, L.~Lov\'{a}sz, and A.~Schrijver.
	\newblock {\em Geometric Algorithms and Combinatorial Optimization}.
	\newblock Springer, New York, 1988.
	
	\bibitem[Hal07]{Hal07}
	N.~Halman.
	\newblock Simple stochastic games, parity games, mean payoff games and
	discounted payoff games are all \text{LP}-type problems.
	\newblock {\em Algorithmica}, 49(1):37--50, 2007.
	
	\bibitem[HBV04]{BSV04}
	S.~Sandberg H.~Bj\"{o}rklund and S.~Vorobyov.
	\newblock A combinatorial strongly sub-exponential strategy improvement
	algorithm for mean payoff games.
	\newblock {DIMACS Technical Report 2004-05}, DIMACS, Rutgers University, 2004.
	
	\bibitem[JPZ06]{JPZ06}
	M.~Jurdzi\'{n}ski, M.~Paterson, and U.~Zwick.
	\newblock A deterministic subexponential algorithm for solving parity games.
	\newblock In {\em SODA '06}, pages 117--123, New York, NY, USA, 2006. ACM.
	
	\bibitem[Jur98]{J98}
	M.~Jurdzi\'{n}ski.
	\newblock Deciding the winner in parity games is in \text{UP} $\cap$
	\text{co-UP}.
	\newblock {\em Inf. Process. Lett.}, 68(3):119--124, 1998.
	
	\bibitem[Jur00]{J2000}
	M.~Jurdzi\'{n}ski.
	\newblock {\em Games for Verification: Algorithmic Issues}.
	\newblock PhD thesis, Department of Computer Science, University of Aarhus,
	Denmark, 2000.
	
	\bibitem[Kar78]{Kar78}
	R.~M. Karp.
	\newblock A characterization of the minimum cycle mean in a digraph.
	\newblock {\em Discrete Math.}, 23:309--311, 1978.
	
	\bibitem[Kha80]{K80}
	L.~Khachiyan.
	\newblock Polynomial algorithms in linear programming.
	\newblock {\em U.S.S.R. Comput. Math. Math. Phys.}, 20:53--72, 1980.
	
	\bibitem[Kha83]{K83}
	L.~Khachiyan.
	\newblock Convexity and complexity in polynomial programming.
	\newblock In {\em Proceedings of the International Congress of Mathematicians,
		Warsaw}, page 1569â€“1577, 1983.
	
	\bibitem[Kha84]{K84}
	L.~Khachiyan.
	\newblock {\em Complexity of convex problems of real and integer programming}.
	\newblock PhD thesis, The second thesis, manuscript, Computer Center of the
	USSR Academy of Sciences (in Russian), Moscow, 1984.
	
	\bibitem[KL93]{KL93}
	A.~V. Karzanov and V.~N. Lebedev.
	\newblock Cyclical games with prohibition.
	\newblock {\em Mathematical Programming}, 60:277--293, 1993.
	
	\bibitem[Lit96]{Lit96}
	M.~L. Littman.
	\newblock {\em Algorithm for sequential decision making, CS-96-09}.
	\newblock PhD thesis, Dept. of Computer Science, Brown Univ., USA, 1996.
	
	\bibitem[LL69]{LL69}
	T.~M. Liggett and S.~A. Lippman.
	\newblock Stochastic games with perfect information and time-average payoff.
	\newblock {\em SIAM Review}, 4:604--607, 1969.
	
	\bibitem[MO70]{MO70}
	H.~Mine and S.~Osaki.
	\newblock {\em Markovian decision process}.
	\newblock American Elsevier Publishing Co., New York, 1970.
	
	\bibitem[Mou76a]{Mou76a}
	H.~Moulin.
	\newblock Extension of two person zero sum games.
	\newblock {\em Journal of Mathematical Analysis and Application},
	5(2):490--507, 1976.
	
	\bibitem[Mou76b]{Mou76}
	H.~Moulin.
	\newblock Prolongement des jeux \`{a} deux joueurs de somme nulle.
	\newblock {\em Bull. Soc. Math. France, Memoire}, 45, 1976.
	
	\bibitem[Pis99]{Pis99}
	N.~N. Pisaruk.
	\newblock Mean cost cyclical games.
	\newblock {\em Mathematics of Operations Research}, 24(4):817--828, 1999.
	
	\bibitem[RF91]{RF91}
	T.~E.~S. Raghavan and J.~A. Filar.
	\newblock Algorithms for stochastic games: A survey.
	\newblock {\em Mathematical Methods of Operations Research}, 35(6):437--472,
	1991.
	
	\bibitem[Sch09]{Schewe09}
	Sven Schewe.
	\newblock From parity and payoff games to linear programming.
	\newblock In {\em Mathematical Foundations of Computer Science 2009, 34th
		International Symposium, {MFCS} 2009, Novy Smokovec, High Tatras, Slovakia,
		August 24-28, 2009. Proceedings}, pages 675--686, 2009.
	
	\bibitem[Vor08]{Vor08}
	S.~Vorobyov.
	\newblock Cyclic games and linear programming.
	\newblock {\em Discrete Applied Mathematics, Special volume in Memory of Leonid
		Khachiyan (1952 - 2005)}, 156(11):2195--2231, 2008.
	
	\bibitem[ZP96]{ZP96}
	U.~Zwick and M.~Paterson.
	\newblock The complexity of mean payoff games on graphs.
	\newblock {\em Theoretical Computer Science}, 158(1-2):343--359, 1996.
	
\end{thebibliography}
\end{document}